\newcommand{\etal}{\textit{et al.}\xspace}
\title{Simultaneous Matrix Orderings for Graph Collections}
\author{Nathan van Beusekom, Wouter Meulemans, and Bettina Speckmann}
\abstract{%
Undirected graphs are frequently used to model phenomena that deal with interacting objects, such as social networks, brain activity and communication networks. The topology of an undirected graph $G$ can be captured by an adjacency matrix; this matrix in turn can be visualized directly to give insight into the graph structure. Which visual patterns appear in such a matrix visualization crucially depends on the \emph{ordering} of its rows and columns. Formally defining the quality of an ordering and then automatically computing a high-quality ordering are both challenging problems; however, effective heuristics exist and are used in practice.

Often, graphs do not exist in isolation but as part of a collection of graphs on the same set of vertices, for example, brain scans over time or of different people.
To visualize such graph collections, we need a \emph{single} ordering that works well for all matrices \emph{simultaneously}.
The current state-of-the-art solves this problem by taking a (weighted) union over all graphs and applying existing heuristics. However, this union leads to a loss of information, specifically in those parts of the graphs which are different. 
We propose a \emph{collection-aware} approach to avoid this loss of information and apply it to two popular heuristic methods: leaf order and barycenter.

The de-facto standard computational quality metrics for matrix ordering capture only block-diagonal patterns (cliques). Instead, we propose to use \emph{Moran's $I$}, a spatial auto-correlation metric, which captures the full range of established patterns. Moran's $I$ refines previously proposed stress measures. Furthermore, the popular leaf order method heuristically optimizes a similar measure which further supports the use of Moran's $I$ in this context. An ordering that maximizes Moran's $I$ can be computed via solutions to the Traveling Salesperson Problem (TSP); orderings that approximate the optimal ordering can be computed more efficiently, using any of the approximation algorithms for metric TSP.

We evaluated our methods for simultaneous orderings on real-world datasets using Moran's $I$ as the quality metric. Our results show that our collection-aware approach matches or improves performance
compared to the union approach, depending on the similarity of the graphs in the collection. 
Specifically, our Moran's $I$-based collection-aware leaf order implementation consistently outperforms other implementations.
Our collection-aware implementations carry no significant additional computational costs.
}
\keywords{Matrix ordering, graph visualization, algorithms, quality measures}
\newtheorem{lemma}[]{Lemma}
\renewcommand{\subparagraph}[1]{\smallskip\noindent\textbf{#1}}
\begin{document}
\begin{bibunit}

\firstsection{Introduction}

\maketitle

Graphs are a common tool to model interacting entities, be it humans in social networks, synapses in human brains, or servers in computer networks. Visualizations are a natural tool to explore the structure of graphs and to analyze the underlying interactions. The majority of current graph analysis tools make use of node-link diagrams to visualize graphs. However, matrix visualizations of graphs -- which directly draw the graphs' adjacency matrix -- have been shown to be effective for low-level analysis tasks~\cite{GFC2004} (`are the vertices $x$ and $y$ connected') and for comparisons of (large) graphs~\cite{ABHIF2013}. A matrix visualization can highlight local structures in graphs -- such as clusters, bi-cliques, or stars -- but this relies on a suitable \emph{ordering} of the rows and columns to make these structures manifest as visual patterns. Formally defining the quality of an ordering and then automatically computing a high-quality ordering are challenging problems. A multitude of different techniques have been proposed over the years; see the survey by Behrisch~\etal~\cite{behrisch2016matrix} for an extensive discussion.

Often, graphs do not appear in isolation, but they are rather part of a collection of graphs which share a common set of vertices. Examples include dynamic graphs (such as brain activity of one person over time) and graphs with distinct edge sets (such as brain activity patterns of different persons). In either case, visualizing these graph collections with matrices requires us to compute a \emph{single} ordering that works well for all graphs \emph{simultaneously}. Using one single ordering for all graphs in a collection facilitates comparisons between the matrices; at the same time, the ordering should allow local structures in each matrix to visually manifest themselves.

There are only a few visual analysis tools for graph collections that use matrix visualizations. The \emph{Cubix} system by Bach~\etal~\cite{bach2014cubix} allows the user to explore the evolution of a dynamic network using a matrix visualization for each time step. These matrix visualizations are then combined into a so-called matrix cube, using a single, possibly user-defined, ordering for all matrices. The user can reorder any of the matrices in the cube using a variety of provided ordering algorithms; the new ordering is computed based on the values in this single selected matrix only and then applied to all others. The \emph{MultiPiles} technique by Bach~\etal~\cite{bach2015multipiles} is another approach that facilitates the analysis of time-series of graphs. Here collections of adjacency matrices are ``piled'' together, using temporal clustering or based on user interactions; the piles are then arranged in a small multiples style layout to facilitate comparisons. Among the large set of user interactions is the option to impose a so-called local order on a pile (collection) of matrices: this ordering is computed for all matrices in the pile using the weighted union of all matrices as input for the leaf order heuristic. 
Using a union matrix inevitably loses information about the individual matrices, and hence, the ordering algorithms have less information at their disposal.
Nevertheless, this approach works well if all matrices in the pile are fairly similar. However, as illustrated in the example in Figure~\ref{fig:teaser}, if the graphs contain complementary edges, then the union does not contain enough information to arrive at a good simultaneous ordering.

In this paper we propose a \emph{collection-aware} approach to computing simultaneous matrix orderings for collections of graphs. As most previous work, we focus our attention on undirected graphs, that is, symmetric 0-1 valued adjacency matrices (indicated with 0:white and 1:black squares in all figures). Our basic premise is that a method will produce an ordering which works well for all matrices simultaneously, if the decisions that the method takes are based on the individual matrices -- as far as that is possible. Since any method needs to produce a single ordering in the end, eventually information from all matrices in the collection needs to be aggregated. However, the further down the algorithmic pipeline this aggregation is happening, the more of the local structures in the input matrices are preserved. We demonstrate the soundness of our premise and the feasibility of our approach by describing collection-aware variants of two popular matrix ordering methods: the leaf order~\cite{bar2001fast} and barycenter~\cite{makinen2005barycenter,gansner1993technique,eades1994edge} heuristics. 

The de-facto standard computational quality metrics for matrix visualizations are based on distances induced by the input graph. Specifically, the \emph{bandwidth}, \emph{profile}, and \emph{linear arrangement} measures count how far each edge is removed from the diagonal (using different schemes to aggregate this information into a single value).
By design these measures are optimized by orderings that move matrix cells of edges as close to the diagonal as possible. 
However, many meaningful patterns that occur in matrix visualizations, representing graph features such as bi-cliques or stars (see the survey by Behrisch~\etal~\cite{behrisch2016matrix}), are in fact not close to the diagonal and hence are not captured by these measures. 
The same survey is calling for future research into quantitative measures which evaluate the quality of all patterns.
To overcome the current lack of such quantitative measures, Behrisch~\etal~\cite{behrisch2016magnostics} propose to describe and compare matrices via feature vectors (\emph{Magnostics}) that describe specific patterns. These vectors are well-suited to support database queries and user interactions, but they are less suited for computational benchmarks and algorithmic optimization.

We observe that salient patterns in matrix visualizations are formed by clusters of black and white cells -- a simple form of spatial auto-correlation. We hence propose to use Moran's $I$~\cite{moran1950notes}, a prominent measure for spatial auto-correlation, as a quality metric for matrix visualizations. Moran's $I$ counts vertical and horizontal adjacencies in three classes: black-black, white-white, and black-white (see Figure~\ref{fig:Moran}). Black-black and white-white adjacencies contribute positively to the count, weighted by relative frequency of white or black, while black-white adjacencies contribute negatively. The final score is normalized to lie between $-1$ (checkerboard pattern) and $+1$ (a white or black matrix). 
Moran's $I$ is closely related to several quality metrics which are used internally in matrix re-ordering algorithms (see Section 7.2 in the survey by Behrisch~\etal~\cite{behrisch2018quality} on quality metrics). However, until now, none of these metrics have been used as a quality metric for the final matrix visualization. 

\begin{figure}
    \centering
    \includegraphics{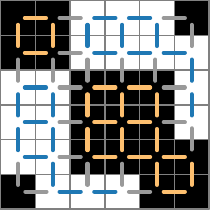}
    \caption{Moran's $I$ counts vertical and horizontal adjacencies in three classes: black-black, white-white, and black-white.}
    \label{fig:Moran}
\end{figure}

\subparagraph{Contributions.} Summarizing the above, the contributions of our paper are twofold. Our primary contribution is the new collection-aware approach hinted at above, with a description of how to apply this approach to the leaf order and barycenter heuristics. 

Our second contribution is the introduction of Moran's $I$ as a quality metric for matrix visualizations. Specifically, we explain how Moran's~$I$ relates to existing measures and algorithms, we show that Moran's~$I$ captures the full range of established patterns and anti-patterns in matrix visualizations, and we explain how to efficiently approximate an optimal matrix ordering for Moran's $I$. 

Throughout the main body of the paper we focus on collections of undirected, unweighted graphs with symmetric adjacency matrices. In Section~\ref{app:moransi} in the supplementary material we consider the more general case of directed graphs. There we also argue that for undirected graphs it is optimal (under Moran's $I$) to use a single ordering for rows and columns. Hence, we restrict our discussion in the main paper to a single ordering. However, our techniques and measures readily generalize to weighted or directed graphs.

We validate the efficacy of our methodology using a short computational experiment using implementations of the collection-aware and union approaches, to showcase that the loss of information indeed occurs in real-world data, and that the collection-aware approach overcomes this problem.

The paper is organized as follows. In Section~\ref{sec:prelims} we briefly summarize notations and definitions. We provide details on measuring quality and distances for matrix orderings and provide our exposition on Moran's~$I$ in Section~\ref{sec:metrics}. We then describe and review the union approach, and our newly proposed collection-aware approach and its implementation in Section~\ref{sec:simulalgorithms}. We describe in Section~\ref{sec:experiments}  the setup and results of our computational experiments, and close in Section~\ref{sec:conclusion} with a review of our findings and possible future work.

\subparagraph{Related work.} 
We are not aware of previous work that computes simultaneous orderings for matrices, beyond the union approach of MultiPiles \cite{bach2015multipiles}; see above, as well as Section~\ref{sec:simulalgorithms} for a discussion of this method. A discussion of related work in terms of ordering quality is deferred to Section~\ref{sec:metrics}. 

In the graph-drawing literature, we find related work on simultaneously drawing graphs in the node-link diagram style \cite{blasius2013}. The goal here typically is to avoid crossings, i.e., draw graphs planarly, on a common vertex set. In other words, can locations for each vertex be found, such that for each graph, all edges of that graph can be drawn using these locations (e.g., with straight lines or few bends per edge) while not introducing intersections. Note that crossings between edges of different graphs are thus allowed. Results here focus on theoretical aspects, establishing computational complexity, i.e., showing that the problem is hard in general \cite{estrella2008} but that other variants admit polynomial-time solutions \cite{cabello2011geometric,fowler2011characterizations,haeupler2013}. Recently, simultaneous embeddings were generalized to graphs drawn in 3D, with the goal that different two-dimensional projections preserve user specified distances~\cite{HHKN2021}. Beck~\etal~\cite{beck2017dynamic} present an extensive overview on visualization techniques for dynamic graphs, which also touches upon matrix visualizations.

We focus on simple black-and-white representations of undirected, unweighted graphs. However, there are many possibilities to augment matrix visualizations. For example, cells can be used to display auxiliary data of the edges, including temporal data \cite{Elmqvist2008ZAME,yi2010timematrix}. Moreover, augmentations with lines can help overcome some of the drawbacks of matrix visualizations, such as the identification of paths \cite{shen2007paths,henryriche2007matlink}.

\section{Preliminaries}
\label{sec:prelims}

\subparagraph{Graphs.}
A graph $G = (V,E)$ consists of a set $V$ of $n$ vertices and a set $E \subseteq V^2$ of $m$ edges.
We assume graphs to be undirected, that is, $(u,v) \in E \Leftrightarrow (v,u) \in E$. Typically, undirected graphs do not contain self-loops $(u,u) \in E$, but we assume that these may occur.

With $N(G,v)$ we denote the neighborhood of vertex $v$ in graph $G$. Specifically, we interpret it as an $n$-dimensional 0-1 vector: an entry is 1 if and only if $(v,u_i) \in E$ where $u_i$ is the $i$th vertex in some arbitrary fixed order of the vertices in $V$.

\subparagraph{Orderings.}
An \emph{ordering} of a graph $G$ is a permutation of its vertices $V$, which we represent as a bijective function on the indices, $\rho \colon \{1,\ldots,n\} \rightarrow  V$.
So, $\rho(1)$ is the first vertex in the ordering, and $\rho^{-1}(v)$ is the rank (position) of $v \in V$ in the ordering.\footnote{Note that the survey by Behrisch~\etal~\cite{behrisch2016matrix} defines the ordering function the other way around, from vertex to index; being a bijection, this is but a notational difference. We found that in our exposition, we rely mostly on the resulting row order so our definition avoids excessive use of the inverse.}
We use $\rho(i,j)$ as a shorthand for the pair $(\rho(i),\rho(j))$, that is, a pair of vertices which may or may not constitute an edge in $E$. 

Given an ordering $\rho$ of $G$, we can create a table with $n$ rows and $n$ columns, where each row and column is associated with a vertex through the ordering. We color each cell $[i,j]$ in row $i$ and column $j$ of this table black if the edge $\rho(i,j)$ is in $E$, and white otherwise.  


\subparagraph{Simultaneous orderings.}
Assume that we are given a set $\mathcal{G} = \{ G_1, \ldots, G_k \}$ of $k$ undirected graphs .
Each graph $G_i = (V,E_i)$ is defined on the same set $V$ of vertices but has its own set of edges $E_i \subseteq V^2$. 
Our goal is to find a \emph{simultaneous} ordering $\rho$ for the set $\mathcal{G}$, that is, an ordering of the vertices $V$ that results in good matrix visualizations for all graphs in the set $\mathcal{G}$.

\begin{figure}[b]
    \centering
    \includegraphics{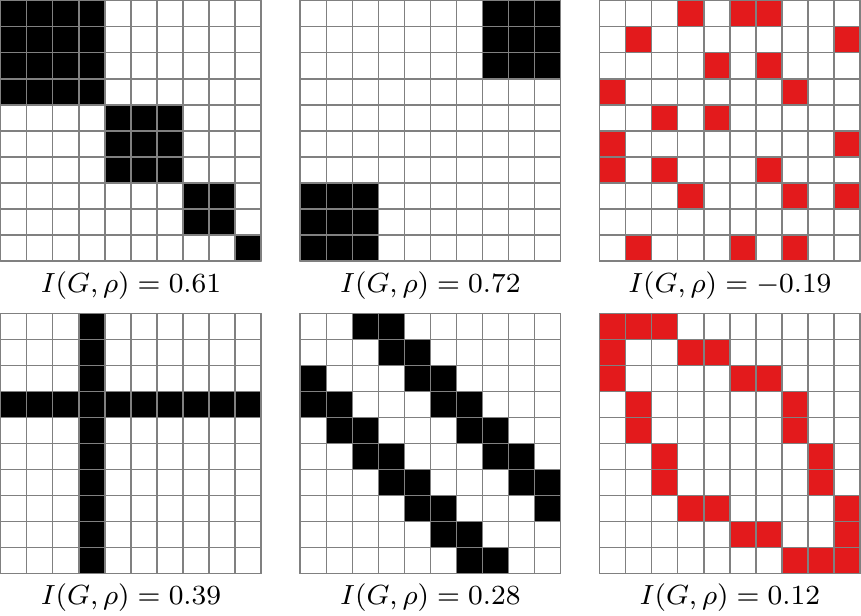}
    \caption{Patterns (black) and anti-patterns (red)~\cite{behrisch2016matrix} with Moran's $I$ score (higher = more correlated). Top row: block pattern, off-diagonal block pattern, noise anti-pattern; bottom row: line/star pattern, bands pattern, bandwidth anti-pattern.}
    \label{fig:patternsandantipatterns}
\end{figure}

\section{Measuring ordering quality}
\label{sec:metrics}

Computational quality measures play an important role when designing and evaluating matrix ordering algorithms. Here measures are used essentially in three different ways: to evaluate the final quality of a matrix visualization, as part of the objective function to be optimized by an algorithm, or simply as distance function between two rows of the matrix. In principle each measure can be used in all three settings, however, we observe that so far, certain measures have been used only in one or two of them. We can roughly separate existing measures into two categories: measures that are based on distances in the ordering between vertices and measures that are based on adjacencies of row (or column) vectors in the matrix.

\subparagraph{Ordering distance.} 
Measures in this category focus on combinatorial, connectivity aspects of the input graph. That is, they measure how well the ordering represents the (edge) connectivity in the graph. The measures attempt to capture the idea that nodes which are adjacent or at least close in the graph should be adjacent or close in the order and vice versa.
The de-facto standard computational quality metrics are all based on the concept of rank difference (distance in the ordering): $\lambda_\rho(u,v) =  |\rho^{-1}(u) - \rho^{-1}(v)|$ should be small for all $(u,v) \in E$.
The function $\lambda$ effectively expresses the deviation of an edge from the diagonal of the matrix. This principle is used to define three common measures~\cite{behrisch2016matrix}:
\begin{description}
\item[bandwidth] which is the maximum deviation, $\max_{(u,v) \in E}{\lambda(u,v)}$
\item[profile] which measures per vertex the maximum deviation to an adjacent vertex earlier in the order, \[\sum_{i=1}^n ( i - \min_{j < i \wedge \rho(j,i)\in E} j ) = \sum_{i=1}^n \max_{j < i \wedge \rho(j,i)\in E} \lambda(\rho(j),\rho(i))\]
\item[linear arrangement] which accounts for all edges, \[\sum_{u \in V} \sum_{(u,v)\in E} \lambda(u,v) = \sum_{(u,v) \in E} \lambda(u,v)\]
\end{description}
All three measures are focused on keeping edges close to the diagonal and are hence optimized by orderings which form block patterns along the diagonal (see Figure~\ref{fig:patternsandantipatterns} top left).
They are used both in optimization functions and, specifically linear arrangement, as a measure for the final quality of a matrix visualization. 

However, several meaningful patterns which correspond to salient structures in the input graph are not related to distance from the diagonal. Furthermore, the bandwidth anti-pattern (which does not match a logical structure in the input graph) is in fact a typical result of optimizing for ordering distance. See Figure~\ref{fig:patternsandantipatterns} for an illustration of the most common patterns according to the survey by Behrisch~\etal\cite{behrisch2016matrix}.
Table~\ref{tab:patternsandantipatterns} lists each of the three ordering distance measures for each pattern and anti-pattern (lower = better). Bandwidth and linear arrangement, for example, 
assess the off-diagonal block pattern as worse than the bandwidth anti-pattern, and profile cannot distinguish between these. In contrast, Moran's $I$ consistently ranks the patterns higher than the anti-patterns (higher = better).

\begin{table}[t]
\caption{Metrics for the patterns and anti-patterns in Figure~\ref{fig:patternsandantipatterns}. For the first three metrics lower is better, while for the Moran's $I$ higher is better.}
\label{tab:patternsandantipatterns}
\small
\begin{tabu} to \linewidth {X[1,l,m]|X[1,r,m]|X[1,r,m]|X[1,r,m]|X[1,r,m]}
        \toprule
        {\bfseries Pattern} &
        {\bfseries Bandwidth} &
        {\bfseries Profile} &
        {\bfseries Linear arrangement} &
        {\bfseries Moran's $I$} \\ 
        \midrule
        Block & 3 & 10 & 30 & 0.61 \\
        \midrule
        Off-diagonal Block & 9 & 24 & 126 & 0.72\\
        \midrule
        Line/Star &  6 & 24 & 54 & 0.39 \\
        \midrule
        Bands & 3 & 23 & 74 & 0.28 \\
        \midrule
        \end{tabu}
       \begin{tabu} to \linewidth {X[1,l,m]|X[1,r,m]|X[1,r,m]|X[1,r,m]|X[1,r,m]}
        \toprule
        {\bfseries Anti-pattern} &
        {\bfseries Bandwidth} &
        {\bfseries Profile} &
        {\bfseries Linear arrangement} &
        {\bfseries Moran's $I$} \\ 
        \midrule
        Noise  & 8 & 28 & 76 & -0.19 \\
        \midrule
        Bandwidth & 4 & 24 & 60 & 0.12 \\
        \bottomrule
\end{tabu}
\end{table}

\subparagraph{Adjacency.}
Measures in this category are usually used to compute distances between two (adjacent) rows of a matrix, based on the directly adjacent cells in the respective rows. 
Two vertically or horizontally adjacent black squares correspond to two edges which share a vertex. Hence, in some sense these measures promote the clustering of the neighborhood of vertices in the graph into adjacent cells of the matrix.
Adjacency measures naturally generalize to the complete matrix and hence are used also as part of optimization functions.
So far, measures in this category do not appear to have been used as a quality measure for the final matrix visualization. 

Moran's $I$, which we will describe in greater detail in the next subsection, is an adjacency measure, and so are the \emph{measure of effectiveness} by McCormick~\etal~\cite{McCormick1969ME,McCormick1972ME} and the stress measure by Niermann~\cite{Niermann2005stress}. All standard distance measures for vectors fall into this category as well, such as the Euclidean distance $L_2$. Note that the squared Euclidean distance is identical to the Manhattan distance for $0-1$ vectors. 

Lenstra and Kan~\cite{LenstraKan1975me} observed that an optimal ordering for the measure of effectiveness is equivalent to a traveling salesperson path, where each matrix row corresponds to a city and the distance between two rows is measured by the number of pairwise vertical black-black adjacencies. In fact, every adjacency measure is optimized globally via a traveling salesperson path. The \emph{Bond Energy Algorithm} by McCormick~\etal~\cite{McCormick1969ME,McCormick1972ME} is in fact a heuristic for TSP using the measure of effectiveness. The popular leaf order heuristic is a heuristic for TSP as well, using the Euclidean distance. 

\subsection{Spatial auto-correlation: Moran's \emph{I}}

We observe that salient patterns are formed by clusters of black cells: moving rows (vertices) close together which have similar neighborhoods.
As such, we postulate that promoting patterns is a form of spatial auto-correlation. Spatial auto-correlation measures are global measures of structure in the data.
The matrix visualization in this context becomes the data for which we try to measure spatial auto-correlation.

We propose to use Moran's $I$ \cite{moran1950notes}, one of the prominent measures for spatial auto-correlation. This very general measure requires values associated with each cell and a weight matrix which captures how the cells are structurally (visually) related. In our case, we associate the values 1 and 0 with each cell, depending on whether the associated edge is in $E$ or not. We design the weight matrix such that each cell is considered adjacent (with weight 1) to the cells with which it shares a border and unadjacent (weight 0) otherwise. Higher scores in Moran's $I$ indicate a stronger correlation and are thus desirable for an ordering.

As described above, we use Rook's adjacency rather than Queen's adjacency\footnote{This refers to the movement capabilities of chess pieces.}. Our motivation is twofold. First, two cells $(i,j)$ and $(i',j')$ correspond to two edges and thus up to four vertices. If each of these has a different value, we get indeed four vertices and the two edges connect two arbitrary pairs. As such, it does not match to a local structure. However, if the two cells are in the same column or row, there are only three vertices, and thus it describes a small pattern of two vertices sharing a common neighbor. Second, Moran's $I$ with Queen's adjacency fails to capture negative spatial auto-correlation on the prototypical chess board of alternating black and white cells.

In our setting, Moran's $I$ effectively simplifies to the following expression (see below for a derivation).
\[ I(G,\rho) = c_B(G) \cdot B(G,\rho) + c_W(G) \cdot W(G,\rho) - 1 \]
Here, $B(G,\rho)$ and  $W(G,\rho)$ refer to the number of black-black and white-white  adjacencies, respectively. See Figure~\ref{fig:examples} for illustrations. The $c$-terms are constants, relying only on the number of vertices and edges in $G$, that weigh the relative impact of these types of adjacencies. Generally speaking, if the matrix has more white cells than black cells, black-black adjacencies have more impact and vice versa.

\begin{figure}[b]
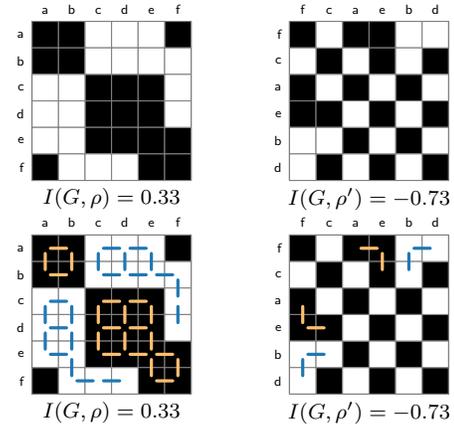

    \centering
    \includegraphics[page=1]{lemma-examples}
    \includegraphics[page=2]{lemma-examples}
    \caption{Visualizations of the same graph based on two different orderings, $\rho$ and $\rho'$. The bottom row shows the same visualizations, with annotated horizontal and vertical adjacencies. More such lines mean more same-colored adjacencies and thus a higher quality in terms of Moran's $I$.}
    \label{fig:examples}
\end{figure}

\subparagraph{Derivation.}
Here we present a brief derivation of our simplified form of Moran's $I$, starting from its general form. For a detailed derivation, refer to Section~\ref{app:moransi} of the supplementary material.

A graph $G = (V,E)$ on $n$ vertices for some fixed ordering $\rho$ implies an $n \times n$ 0-1 matrix $M$ where $M_{ij} = 1$ if $\rho(i,j) \in E$ and 0 otherwise. The cells of $M$ are interpreted as the spatial units for the sake of Moran's $I$, and contains all necessary information to derive Moran's $I$. We hence omit dependencies on $G$ and $\rho$ in our derivation here, for sake of notational simplicity. 

We use slightly different notation than perhaps conventional for the general form of Moran's $I$, so as to distinguish between the general form and our simplified form. Specifically, we use $r$ instead of $N$ to denote the number of spatial units (regions) and use a weights matrix~$T$ (topology) with sum $t$ instead of $w$ with sum $W$. Moreover, we refer to the regions using indices $a$ and $b$ rather than $i$ and $j$.

Moran's $I$ is defined over $r$ spatial units, which have associated values $x_a$ for $a \in \{1, \ldots, r\}$. Furthermore, an $r\times r$ matrix $T$ encodes the weights for (typically neighboring) regions: entry $T_{ab}$ is the weight between region $a$ and $b$. We use $t$ to denote the sum over all weights in~$T$. Moreover, let $\overline{x}$ denote the average value $\frac{\sum_{a=1}^r x_a}{r}$. The general form of Moran's $I$ is as follows \cite{moran1950notes}:

\[ I = \frac{r}{t} \cdot \frac{\sum_{a=1}^{r}\sum_{b=1}^{r} T_{ab} (x_a - \overline{x})(x_b - \overline{x})}{\sum_{a=1}^{r} (x_a - \overline{x})^2} \]

As the cells of $M$ correspond to spatial units, we have $r = n^2$.
Let $m$ denote the total number of entries with value 1 in $M$ (the number of edges, with double counting). Hence, $\overline{x} = \frac{m}{n^2}$ and we can rewrite $x_a - \overline{x} = \frac{x_a n^2 - m}{n^2}$. This allows us to simplify the generic form to

\begin{align*} 
I &= \frac{n^2}{t} \cdot \frac{\sum_{a=1}^{r}\sum_{b=1}^{r} T_{ab} (x_a n^2 - m)(x_b n^2 - m)}{\sum_{i=1}^{r} (x_a n^2 - m)^2}.
\end{align*}

With Rook's adjacency, $T_{ab}$ is 1 if $a$ and $b$ are adjacent cells in $M$ and 0 otherwise. We thus need to consider only the terms for which $T_{ab}$ is 1, i.e., for adjacent cells in $M$. Furthermore, since $M$ is a 0-1 matrix, $x_a$ is either 0 or 1: the term $x_a n^2 - m$ is either $n^2 - m$ or $-m$. In the denominator, there are hence two cases to consider (white cells and black cells) and for the numerator there are three cases ($a$ and $b$ describe a white-white, black-black, or black-white adjacency). The contribution to Moran's $I$ is the same per case. 

We thus case simplify this summation by simply counting the number of occurrences of each case, and we identify $B$ (black-black), $W$ (white-white) and $D$ (black-white) with this count. There are $2n(n-1)$ adjacencies, but we count every adjacency twice since $T$ is symmetric: $t = \sum T_{ab} = 4n(n-1)$. 

\begin{align*}
    I &= \frac{n^2}{4n(n-1)} \cdot \frac{ 2 B \cdot (n^2 - m)^2 + 2 W\cdot (-m)^2 + 2 D\cdot (n^2 - m)(-m)}{m (n^2 - m)^2 + (n^2-m)(-m)^2} \\
    &= \frac{1}{2n(n-1) m (n^2 - m)} \cdot ( B \cdot (n^2 - m)^2 + W\cdot m^2 - D\cdot (n^2 - m)m)
\end{align*}

Since $B+W+D = 2n(n-1)$, we see that it suffices to count only $B$ and $W$. This allows us to simplify the expression to a sum of $B$ and $W$, introducing $c_B$ and $c_B$ as the coefficients of $B$ and $W$:

\begin{align*}
    I 
    &= B \cdot \frac{n}{2(n-1) m}  + W \cdot \frac{n}{2 (n-1) (n^2 - m)} - 1 \\
    &= c_B \cdot B + c_W \cdot W - 1
\end{align*}

\subparagraph{Using Moran's $I$.}
With Moran's $I$, we have a measure that is aimed at capturing how well-structured the matrix visualization is. That is, it aims to globally capture patterns, without specifically aiming to specify what a pattern actually constitutes. We indeed see in Table~\ref{tab:patternsandantipatterns} that the patterns score considerably higher than the anti-patterns. As such, this makes the measure more amenable for algorithmic use. 

Indeed, we see that we are effectively counting cells in determining similarity of neighborhoods. 
We denote with $B(G,u,v)$ the number of (vertical) black-black adjacencies we would obtain when $u$ and $v$ are made adjacent in the ordering -- the number of common neighboring vertices in $G$; similarly, $W(G,u,v)$ denotes the white-white adjacencies for $u$ and $v$ -- the number of vertices in $G$ that are neighboring to neither $u$ nor $v$.
Measuring the neighborhood similarity between $u$ and $v$ as $s(G,u,v) = c_B(G) \cdot B(G,u,v) + c_W(G) \cdot W(G,u,v)$, we get the following form of the metric:
\[ I(G,\rho) = -1 + 2 \sum_{i=1}^{n-1} s(G,\rho(i),\rho(i+1)) \] 
Maximizing Moran's $I$ is equivalent to maximizing the sum of~$s$~over adjacent rows.
However, many algorithms such as leaf order are based upon minimizing a sum of distances.
Hence, we define a Moran's $I$ metric $\delta_I(G,u,v) = 1 - s(G,u,v)$, which gives that $I(G,\rho) = n - 2 - 2 \sum_{i=1}^{n-1} \delta_I(G,\rho(i),\rho(i+1))$. Now, maximizing Moran's $I$ corresponds exactly to minimizing the sum of distances.
We do observe that $\delta_I$ is not a proper metric, since identical rows do not have a distance of zero. This identity of indiscernables is inherently incompatible with Moran's $I$ as two fully black rows and two fully white rows should score differently depending on the number of black cells over the entire matrix.
Nonetheless,  the triangle inequality holds: $\delta_I(G,u,w) \leq \delta_I(G,u,v) + \delta_I(G,v,w)$ rewrites to $1 + s(G,u,w) \geq s(G,u,v) + s(G,u,w)$. However, we know that $I(G,\rho)$ cannot exceed one and thus $s(G,u,v) + s(G,u,w) \leq 1$, which proves the claim. 

Measure $\delta_I$ can generally be used with methods that are based on distance measures between rows such as leaf order. It also works well with algorithms that are designed for metric TSP and rely on the triangle inequality in their approximation guarantees.
Furthermore, we believe that this relation between spatial auto-correlation and neighborhood similarities helps to explain why neighborhood measures for computing orderings have been successful and popular in practice.

\subparagraph{Relation to other adjacency measures.} 
The measure of effectiveness by McCormick~\etal~\cite{McCormick1969ME,McCormick1972ME}, when translated to $0-1$ valued (white-black) matrices, simply counts `1' for each vertical or horizontal black-black adjacency, and `0' otherwise. As such it is a less refined form of Moran's $I$.
The stress measure proposed by Niermann~\cite{Niermann2005stress}, which uses Queen's adjacency, is related to Moran's $I$ as well; the final score is the sum of all squared differences between a cell and its neighbors. As already mentioned above, diagonal adjacencies do not capture a graph property, they simply arise from two independent edges. Hence, this stress measure is less suitable to optimize adjacency matrices of graphs.
Finally, the popular leaf order heuristic\footnote{``[leaf order] consistently generates excellent results visually'' Fekete~\cite{fekete2015reorderjs}} uses the Euclidean distance between row vectors. For a 0-1 valued matrix, the Euclidean distance simply counts `1' for each black-white adjacency and then takes the square root of this sum. Orders produced with the leaf order algorithm hence tend to score very well on Moran's $I$.

\begin{figure}[t]
    \centering
    \includegraphics[width=.24\columnwidth]{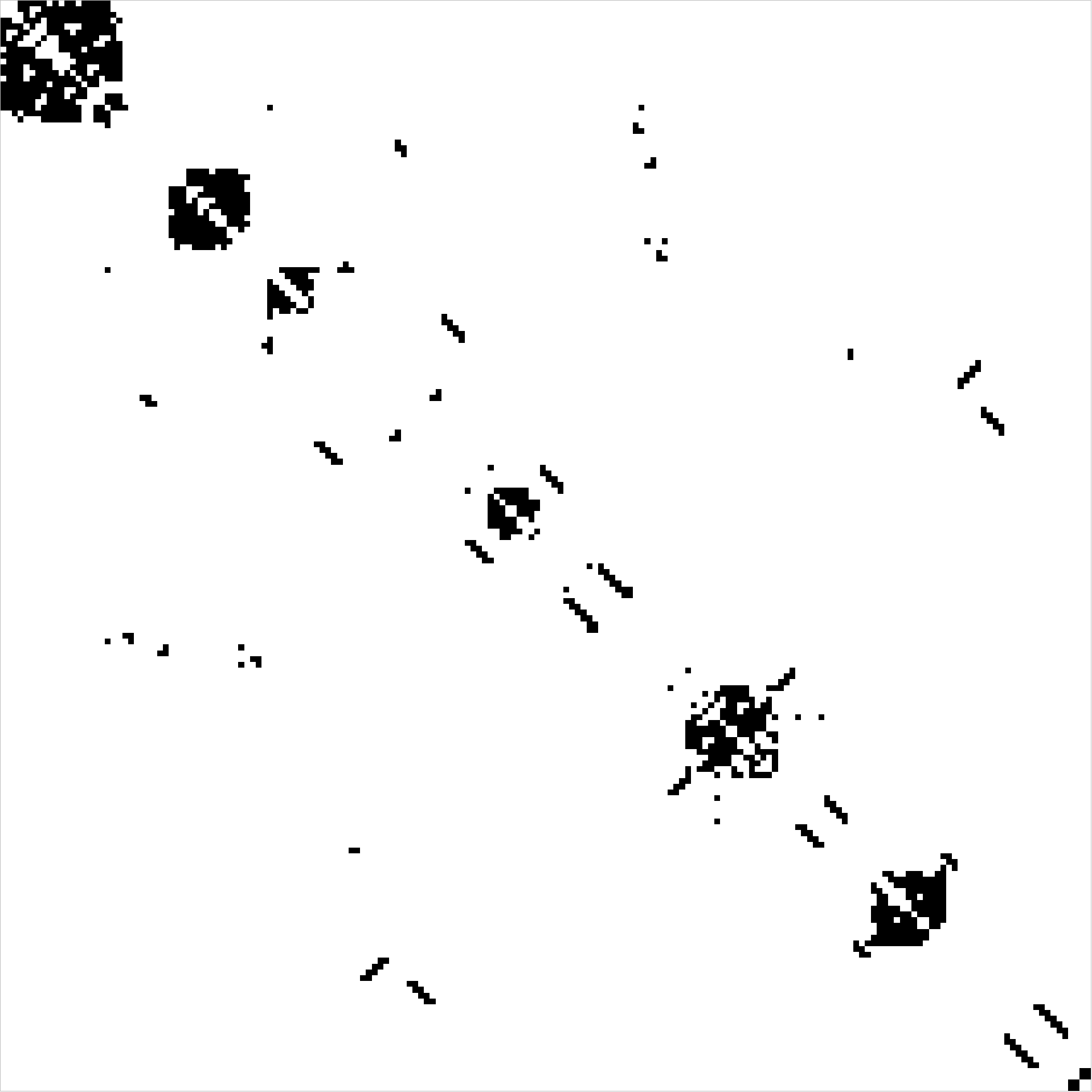}
    \includegraphics[width=.24\columnwidth]{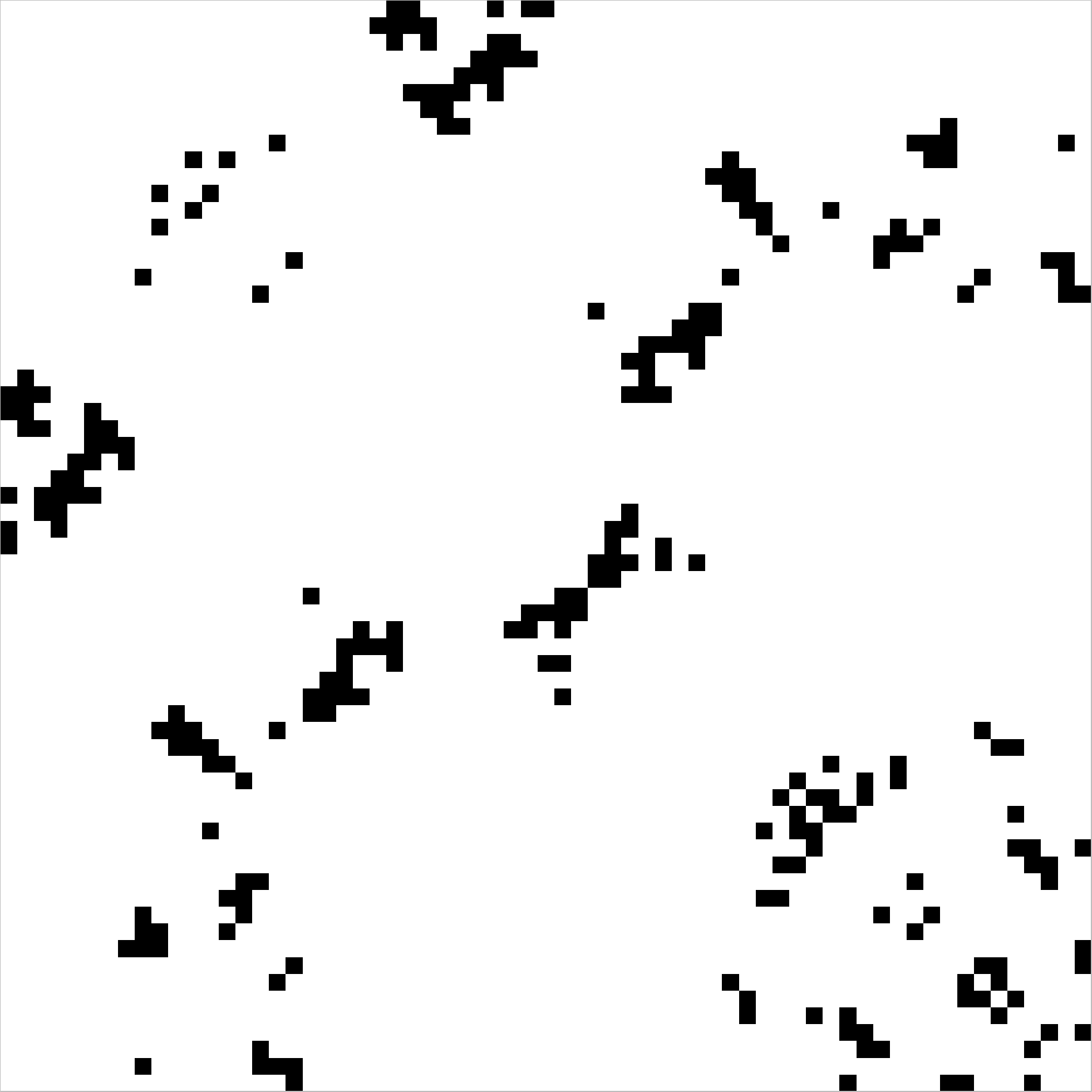}
    \includegraphics[width=.24\columnwidth]{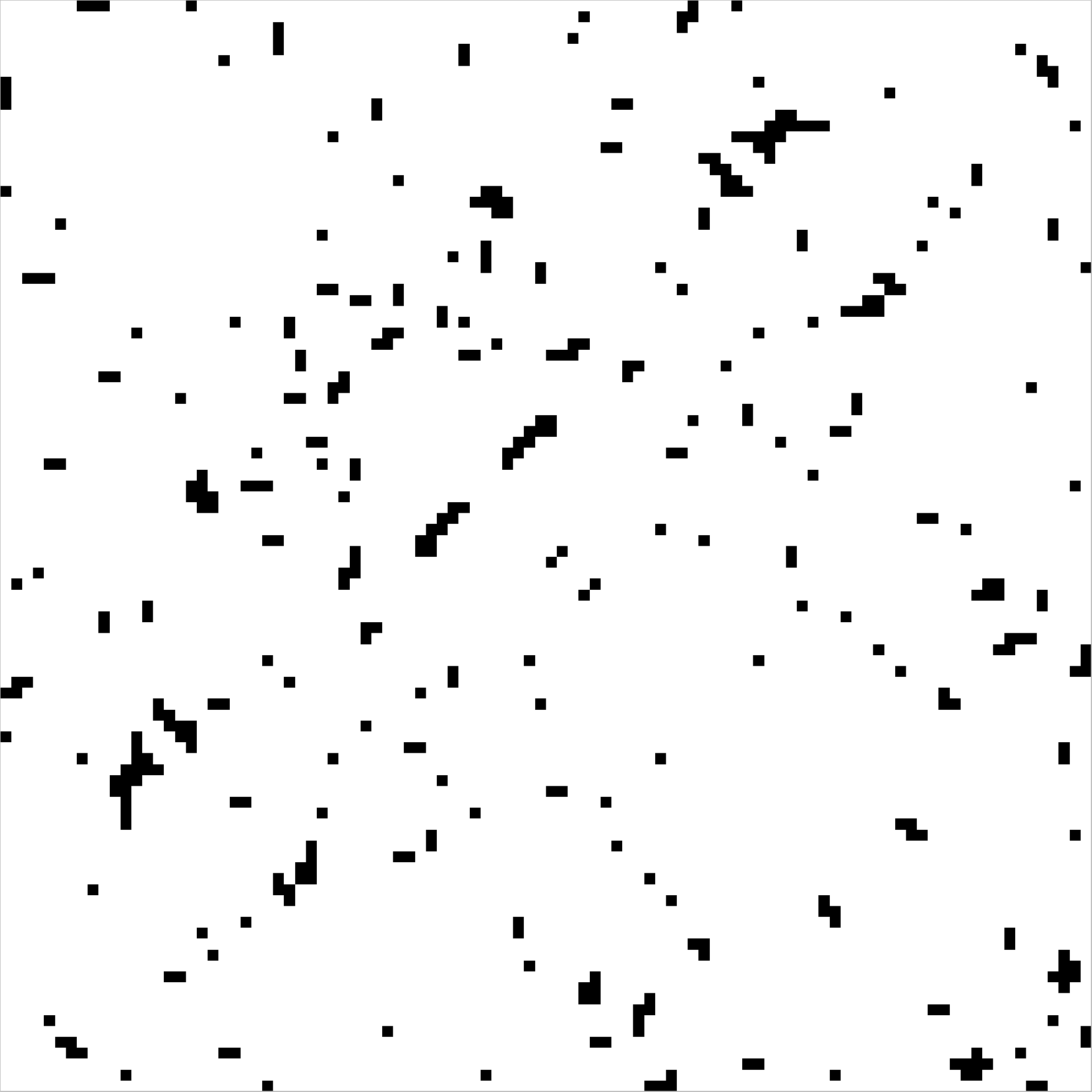}
    \includegraphics[width=.24\columnwidth]{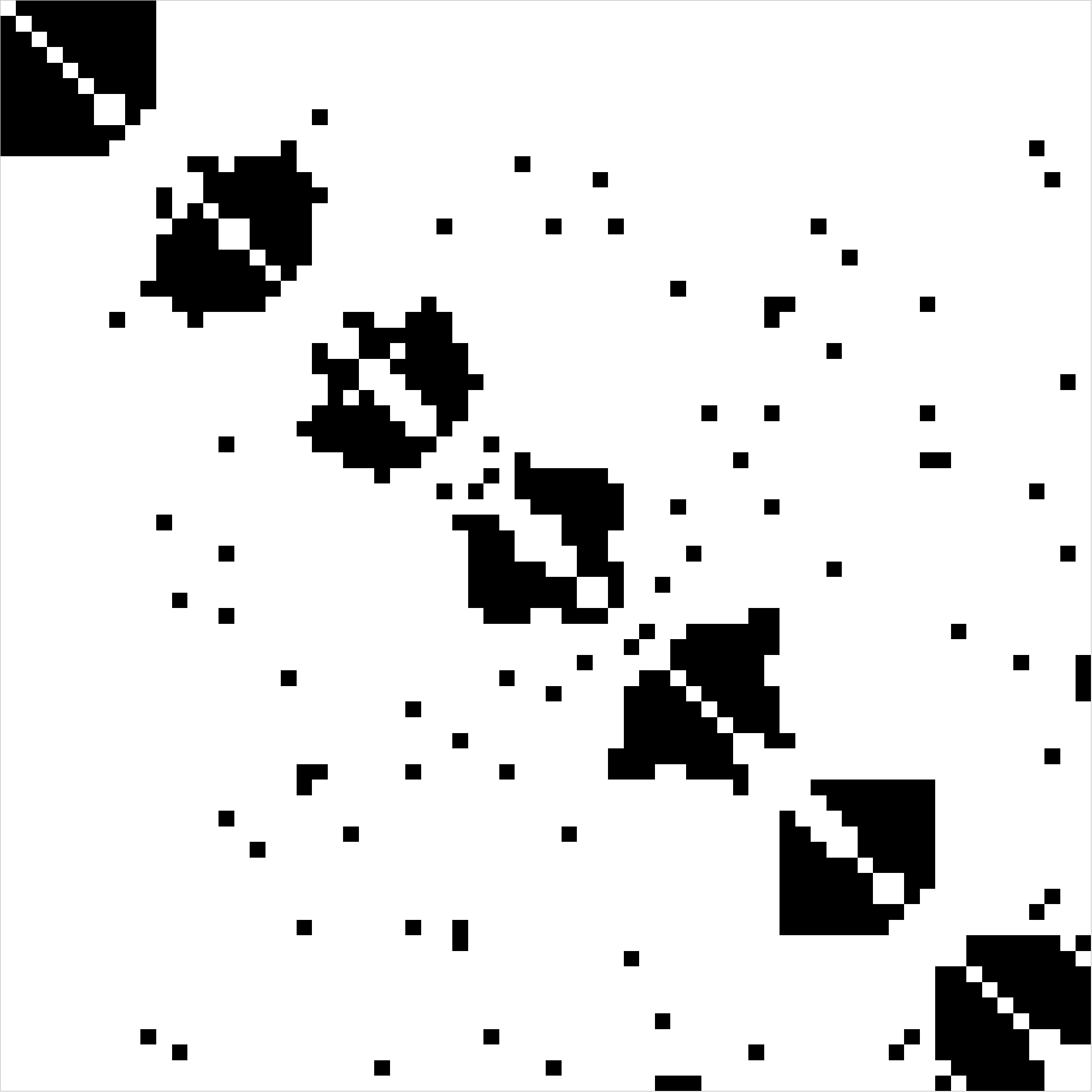}
    
    \noindent\hfill{\footnotesize
    $I(G,\rho)=0.66$
    \hfill $I(G,\rho)=0.33$
    \hfill $I(G,\rho)=0.29$
    \hfill $I(G,\rho)=0.63$ }\hfill\hfill
    
    \medskip
    
    \includegraphics[width=.24\columnwidth]{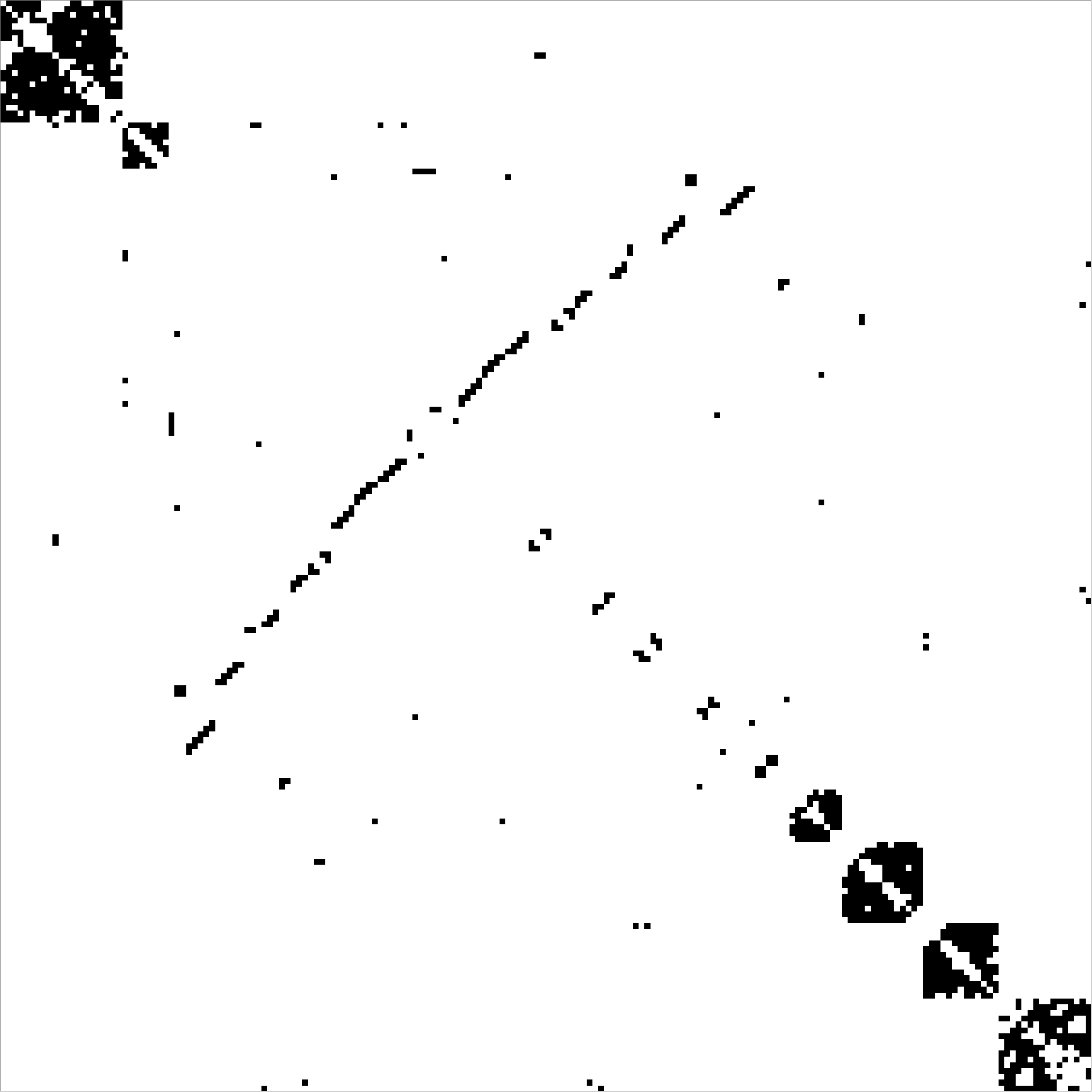}
    \includegraphics[width=.24\columnwidth]{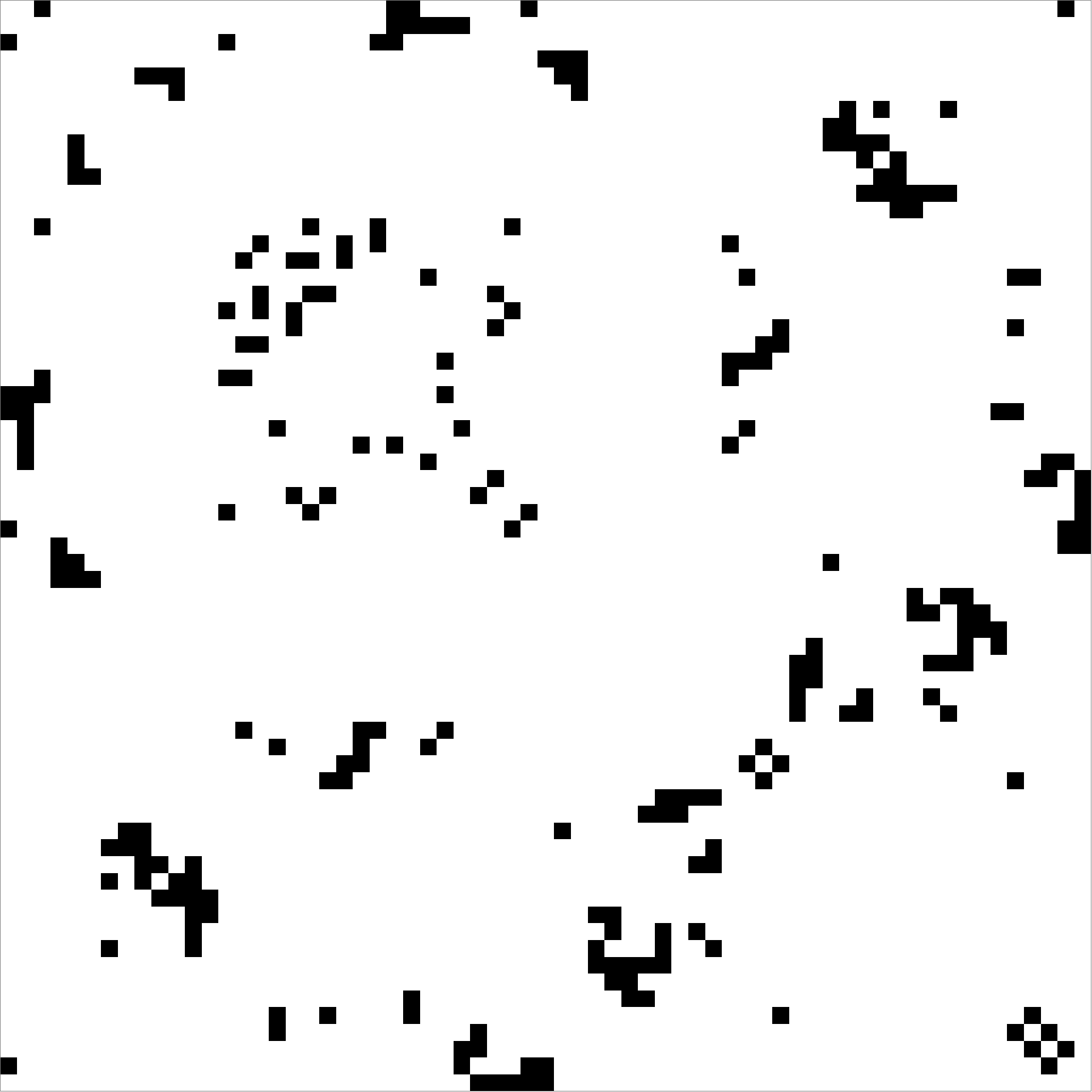}
    \includegraphics[width=.24\columnwidth]{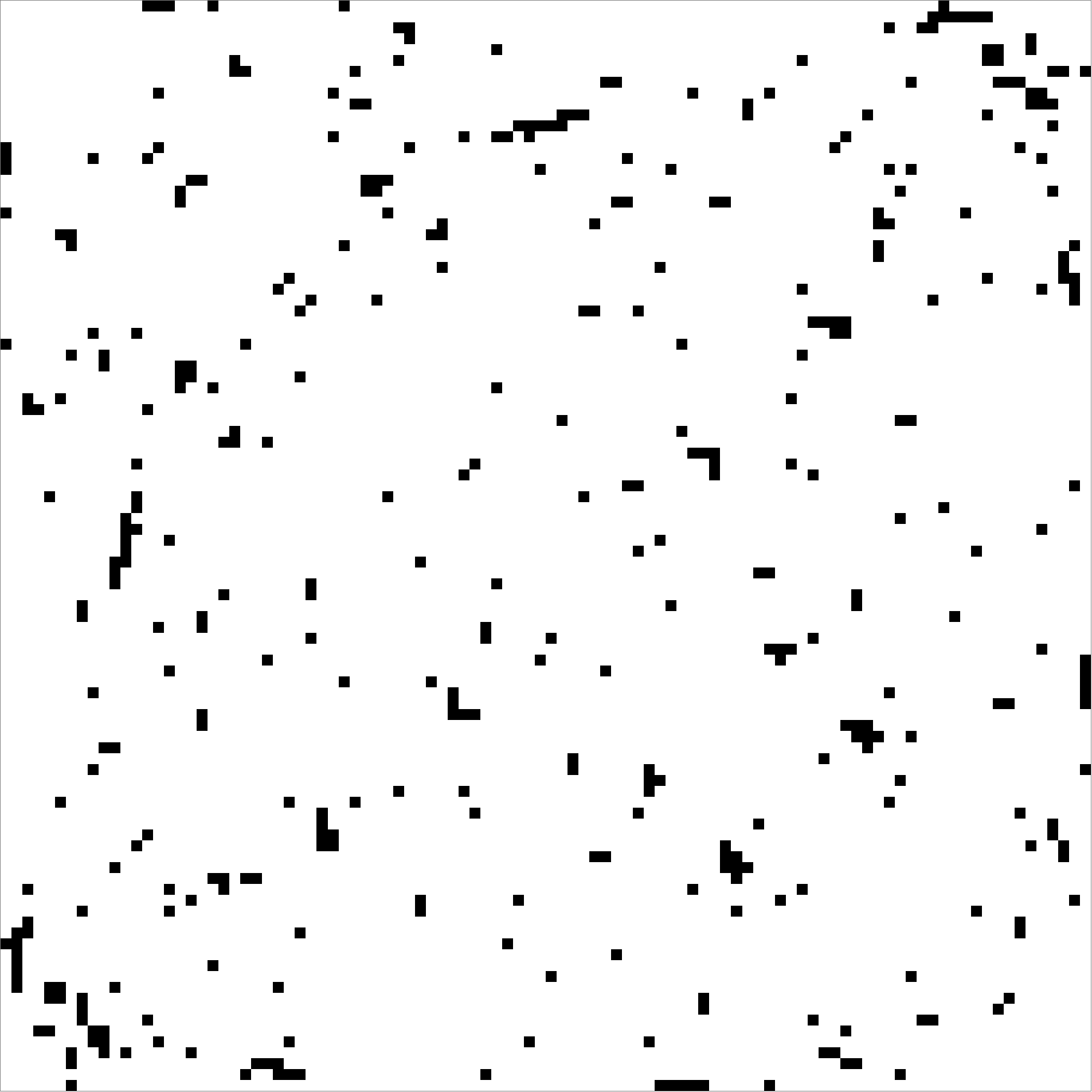}
    \includegraphics[width=.24\columnwidth]{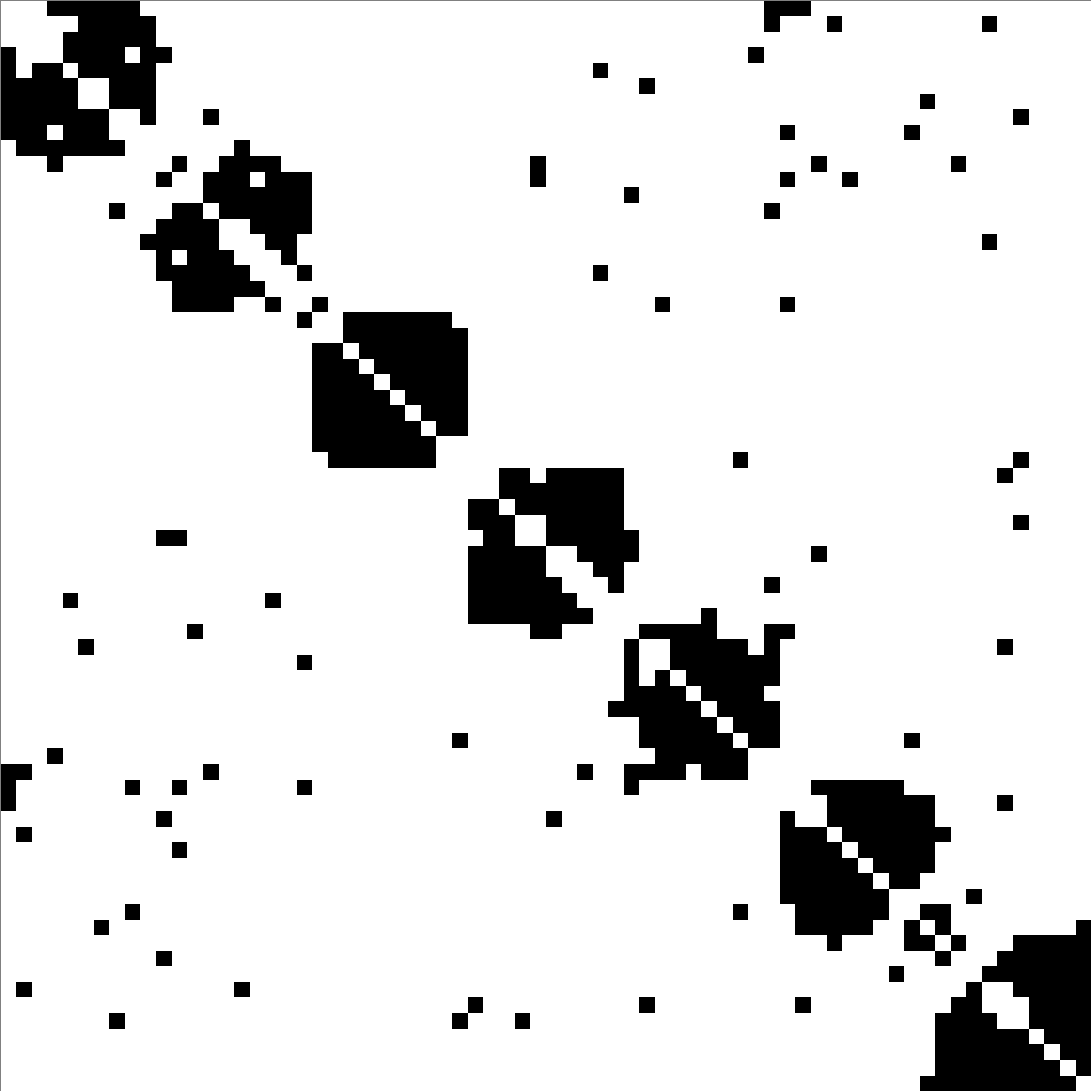}
    
    \noindent\hfill{\footnotesize
    $I(G,\rho)=0.65$
    \hfill $I(G,\rho)=0.30$
    \hfill $I(G,\rho)=0.20$
    \hfill $I(G,\rho)=0.63$ }\hfill\hfill
    
    \caption{Optimizing Moran's $I$ using NN-2OPT (top row) and leaf order (bottom row), matrices from~\cite{behrisch2016matrix}.}
    \label{fig:examples2}
\end{figure}

\subparagraph{Optimizing Moran's \emph{I}.}
As mentioned in Section~\ref{sec:metrics}, every adjacency measure is optimized via a traveling salesperson path. Since TSP is NP-hard to compute, we implemented the Nearest Neighbor (NN) heuristic using $\delta_I$ and further optimized the results using the 2-OPT algorithm~\cite{2opt1958} as long as Moran's $I$ improves by more than 0.0001. Figure~\ref{fig:examples2} shows the results for the four running-example matrices from the survey paper by Behrisch~\etal~\cite{behrisch2016matrix}; each was computed in less than half a second. For comparison, we also include the results using leaf order with the $L_2$ metric. We see that NN-2OPT manages to achieve higher quality in terms of Moran's $I$.


\section{Computing simultaneous orderings}
\label{sec:simulalgorithms}

There are various ways in which a simultaneous ordering can be computed. In the literature, we find two approaches that rely on computing an ordering for a single graph.
A simple method, used by both Cubix \cite{bach2014cubix} and MultiPiles \cite{bach2015multipiles}, to compute an ordering is to base it on a single graph, using any algorithm; this ordering is then simply applied to the other graphs as well. However, we do not consider such an approach to really address the simultaneous ordering problem, as it uses the structure of only a single graph.

Below we review what we refer to as the \emph{union} approach that is suggested by MultiPiles \cite{bach2015multipiles} for the leaf order method and apply it to the barycenter method as well. We then propose our new \emph{collection-aware} approach which overcomes the loss of information that arises in the union approach (see also Figure~\ref{fig:teaser}), and show how to apply it to the leaf order method 
and the barycenter method. As we extend these methods, below is a brief summary of their basic steps (see also the survey \cite{behrisch2016matrix} for details on these methods).

\subparagraph{Leaf order.}
The leaf order method \cite{bar2001fast} computes an ordering in three stages: first, the distance between each pair of vertices is determined, based on some distance measure $\delta(G,u,v)$; second, this information is used to construct a hierarchical clustering on the vertices -- our implementation is built on reorder.js\footnote{\url{https://github.com/jdfekete/reorder.js/}} which uses greedy complete-linkage clustering here; third, the ordering $\rho$ is computed that minimizes $\sum_{i=1}^{n-1} \delta(G,\rho(i),\rho(i+1))$ and ``adheres'' to the clustering tree.
``Adhering'' here means that the ordering matches the order in which an in-order traversal visits the leaves of the tree; effectively, we can choose for each internal node of the tree which of its two children will be the left and the right child. Note that this choice is for every node and thus allows completely reversing the leaves in any subtree. As not all permutations adhere to a given hierarchy, this method avoids the complexity that is inherent in the TSP formulation; indeed, this problem can be solved optimally in an efficient manner \cite{brandes2007optimal}.
Often, this approach is implemented using for $\delta$ some measure of (dis)similarity between $N(G,u)$ and $N(G,v)$.




\subparagraph{Barycenter method.}
The barycenter method \cite{makinen2005barycenter,gansner1993technique,eades1994edge} (see also the survey \cite{behrisch2016matrix}) focuses on optimizing ``crossings'', a standard measure of quality in traditional node-link drawings of graphs. The number of crossings incurred by an ordering $\rho$ of $G=(V,E)$ is defined as follows: we duplicate each vertex $v$ into $v^0$ and $v^1$, and draw $v^i$ at $(\rho(v),i)$; we then draw every edge $(u,v) \in E$ as two line segments, $u^0v^1$ and $v^1v^0$; the number of crossings caused by these line segments (excluding common endpoints) is the number of crossings that $\rho$ incurs.

The basic barycenter method works in two phases. First, starting from an arbitrary ordering, the ordering is repeatedly updated by sorting the vertices according to the median rank of its neighbors. This is done at least for a fixed number of iterations, possibly followed by additional iterations until the number of crossings no longer decreases (``convergence''). Second, a postprocessing step is applied which tries to swap adjacent vertices in the ordering, while such a swap reduces the number of crossings further.

\subsection{The union approach}

With MultiPiles, Bach~\etal \cite[Section 4.7]{bach2015multipiles} suggest an approach to computing a simultaneous ordering: ``MultiPiles can calculate a global ordering which tries to  find a topological clustering across all matrices''. The paper itself provides no further detail on this method, but as their code is open source\footnote{\url{https://github.com/benjbach/multipiles}, accessed Feb. 2021.}, we were able to extract the exact algorithm. What follows is first our general interpretation of their approach, followed by their exact implementation on the leaf order heuristic.


The \emph{union} approach takes the (weighted) union over all graphs in $\mathcal{G}$ to arrive at a graph $H$ on the same vertex set $V$, where each edge has a weight corresponding to the number of graphs of $\mathcal{G}$ it occurs in.
Effectively, it is the sum over all 0-1 adjacency matrices. 
We may apply any ordering algorithm to $H$ and apply the resulting ordering to all graphs in $\mathcal{G}$. 
For a vertex $v$, $N(H,v) = \sum_{G\in\mathcal{G}} N(G,v)$ is now a vector of length $n$ where each entry is an integer in $\{0,\ldots,k\}$, reflecting the weight of the associated edge.

The drawback of the union approach is a potential information loss, as $H$ does not store information on which of the underlying graphs the edges actually occur in. In the extreme case, $H$ could be a complete graph with unit weights for all edges, even though the underlying graphs have structures; imagine for example, two cliques in one graph and a biclique connecting the vertices of the two cliques in a second graph. In this case, all neighborhoods are identical and as such, there is no information left in $H$ to inform a suitable simultaneous order. This is in fact our example in Figure~\ref{fig:teaser}, though we introduced some ``noise'' in this figure. 
The advantage is its simplicity: techniques that inherently work on (or straightforwardly generalize to) weighted graphs can readily be applied.
It could in principle also be applied to techniques for unweighted graphs, but this only exacerbates the information loss.

\subparagraph{Union leaf order.}
MultiPiles \cite{bach2015multipiles} applies the union approach to the leaf order method. This algorithm readily works on weighted graphs, and all it needs is some choice of measure $\delta$. The MultiPiles implementation uses $\delta(u,v) = L_2(N(H,u),N(H,v)) = L_2(\sum_{G \in \mathcal{G}} N(G,u), \sum_{G \in \mathcal{G}} N(G,v))$, that is, the Euclidean distance between neighborhoods. In general, we may of course use other measures instead of $L_2$ in the same fashion. We observe that squaring has no effect on comparisons of distances and thus does not affect the result of the complete-linkage clustering; it may however alter the eventual ordering, since comparisons between sums of distances may change.

But we may also use, for example, a Moran's $I$-based metric. We shall refer to it in this context also as $\delta_I$, but observe that it is not simply counting the number of black-black and white-white adjacencies. Instead, we compute $\delta_I(u,v) = -\sum_{x \in V} (w(u,x) - \overline{w})(w(v,x) - \overline{w})$ where $w(a,b)$ is the weight of edge $(a,b)$ in $H$, i.e., the number of graphs in $\mathcal{G}$ it occurs in, and $\overline{w}$ is the average weight over all pairs. Effectively, this is the original Moran's $I$ formula where we forego the normalization terms as they are all the same for a given matrix; we multiply this by $-1$ as to obtain a distance function. As the implementation relies only on comparisons and sums, the negative values do not cause issues in this method -- though we could factor in the normalization and add 1 as we did for $\delta_I$ in Section~\ref{sec:metrics} to obtain a nonnegative measure that satisfies triangle inequality.

We observe the loss of information here as follows. Whereas indeed similar neighborhoods across the graphs lead to similar sums, the converse is not true: similar sums do not imply similar neighborhoods across the graphs. 
As such, the method may promote putting two vertices adjacent in the ordering that look similar in $H$ but are actually not similar in any of the individual graphs.


\subparagraph{Union barycenter.}
The barycenter method is typically applied in an unweighted setting. However, the crossing measure readily generalizes to a weighted setting. 
That is, rather than counting the number of crossings, we multiply the weights of two intersecting edges and sum the result of all intersecting pairs.
We refer to this as the union barycenter. 

We observe the information loss here as follows. Whereas crossings in one of the graphs contribute to the cost of an ordering, a counted crossing need not actually be present in any of the graphs in the collection. That is, two edges may cross in the union, but not occur together in any graph, and as such incorrectly contribute to the cost.

Though the second stage is readily affected by this change in measure, we note that it does not truly affect the first stage, beyond testing convergence: the ordering by median ranks of neighbors remains the same. We could apply a weighted median instead, or revert to using the actual (weighted) barycenters. As such, if this is pursued, it may be worthwhile to revisit this idea and in fact compute with the actual weighted barycenter. 
However, such extension is beyond the scope of our work here, as we focus on avoiding the loss of information that the union approach incurs.

\subsection{The collection-aware approach}

We propose a new approach to computing simultaneous orderings, one that is \emph{collection-aware}. That is, we push the graph collection actually into the algorithms that compute orderings as much as possible. 
Specifically, when an algorithm makes decisions about the quality of an ordering, or how to modify an ordering based on neighborhoods, we now base this information on all graphs, rather than some aggregated form. In this manner, we may prevent the information loss that the union approach incurs. How this is done specifically, naturally depends on the algorithm under consideration. 

\subparagraph{Collection-aware leaf order.}
As all information is captured in the pairwise distances between vertices, it suffices here to ensure that $\delta$ is collection-aware. That is, $\delta(u,v)$ should be low if $u$ and $v$ have similar neighborhoods in many graphs, \emph{and vice versa}.
Whereas the union approach gives the former, it does not succeed in the latter.

Rather than $\delta(u,v) = L_2(\sum_{G \in \mathcal{G}} N(G,u), \sum_{G \in \mathcal{G}} N(G,v))$, a collection-aware interpretation would apply these operators the other way around: $\delta(u,v) = \sum_{G \in \mathcal{G}} L_2(N(G,u), N(G,v))$. 

Again, we observe that we may replace $L_2$ with other distance measures as well. Specifically, observe that $L_2^2$, using the squared Euclidean distance, is different from using $L_2$ now also in the clustering step. Also, we can again use the Moran's $I$-based metric $\delta_I$ where we can now rely on the metric formulation as presented in Section~\ref{sec:metrics}.

Note that this provides no computational overhead in asymptotic terms. As adding vectors is typically slightly faster than computing distances, we may expect a slight overhead. Yet, this is likely overshadowed by the clustering and leaf order stages of the algorithm and as such barely noticeable.


\subparagraph{Collection-aware barycenter.}
In our collection-aware barycenter implementation, we go back to counting crossings (as we assume unweighted graphs in $\mathcal{G}$), but we now do so for each graph separately and then take the sum over all graphs. This makes the second stage collection-aware.

However, as before, changing the way we measure crossings does not readily influence the procedure in the first stage, beyond testing convergence, as this is based purely on the ranks of the neighbors. In a collection-aware approach, however, we should strive to base these decisions on ordering on all graphs. Rather than using all neighbors in all graphs to determine the target rank of a vertex, we determine the target rank per graph, and aggregate this information. We implement this as follows. 
For every vertex, we compute the median rank of its neighbors in each  graph in $\mathcal{G}$ separately to arrive at a set of median ranks. Subsequently, we compute the median of these median ranks. In the event that a vertex has no neighbors at all in a graph, it is omitted from the set of median ranks. Sorting then proceeds as before, but is now based on the median of medians instead.

\section{Experiments}
\label{sec:experiments}

Here we present the results of a brief experimental evaluation that aims to quantify how much improvement our collection-aware approach provides compared to the union approach, via various implementations. Moreover, we also investigate the use of Moran's $I$ as a distance measure for optimization.

\subsection{Setup}

\subparagraph{Algorithms.}
We have two approaches to solving the simultaneous ordering problem: \emph{(U)} union and \emph{(C)} collection-aware.
We have shown how to implement these approaches on two base algorithms: \emph{(LO)} leaf order 
 and \emph{(BC)} barycenter.
Finally, LO 
can work with various distance metrics. In particular, we use Euclidean distance $L_2$ and the Moran's $I$-based metric $\delta_I$, as well as their squared variants $L_2^2$ and $\delta_I^2$. We name each algorithm using a concatenation of these abbreviations.
For example, U-LO-$L_2$ refers to the union leaf order method using $L_2$, effectively the implementation suggested by MultiPiles; C-LO-$\delta_I^2$ refers to the collection-aware leaf order method using $\delta_I^2$ as a metric; C-BC refers to the collection-aware barycenter method. Our implementation of the ten resulting algorithms is openly available on GitHub\footnote{\url{https://github.com/nvbeusekom/reorder.js}}.

\subparagraph{Data.}
We test our algorithms with three datasets, chosen to obtain a variety of characteristics, ranging from many graphs with few vertices, to few graphs with many vertices;  Table~\ref{tab:datasets} provides some basic statistics.
\begin{description}
\item[FLT] The ``flashtap'' data that was used for MultiPiles \cite{bach2015multipiles}\footnote{\url{https://aviz.fr/~bbach/multipiles/}, accessed March 2021.}. It represents functional brain connectivity in a Parkinson's disease study.
\item[SCH] Social interaction between children and teachers at a primary school \cite{gemmetto2014mitigation,stehle2011high}\footnote{\url{http://www.sociopatterns.org/datasets/primary-school-temporal-network-data/}, accessed March 2021.}.
\item[VIS] Publications data in the InfoVis conference \cite{Isenberg:2017:VMC}\footnote{\url{https://sites.google.com/site/vispubdata/home}, accessed March 2021 at version 9.02.}. We construct a graph on the authors for each year from 2015 to 2020, where an edge between two authors is included if they had a joint paper in this period. We include only authors with publications in at least three years and their co-authors.
\end{description}
The density $m/n^2$ of these graphs varies (Table~\ref{tab:datasets}).
Arguably, graphs with low density such as the VIS co-authorship network are better visualized using other visual idioms, for instance, node-link diagrams or hybrid visualizations, but we include them here to investigate our algorithms under diverse circumstances.
For each dataset we also measured the change $\Delta$ between graphs, which is the number of cells in the matrix that change their value. Though expressed as a fraction of the entire matrix (where FLT changes most, and VIS the least), these values can also be interpreted with respect to the edge density (in which case VIS changes most, relatively speaking, and FLT the least).

\begin{table}[t]
    \centering
    \caption{Overview of datasets. $k$: number of graphs in collection; $n$: number of vertices in each graph; $m$: number of black cells in the matrix; $\Delta$: number of changing edges between graphs. $m$ and $\Delta$ are given as a percentage of the number of cells in the matrix ($n^2$), and we provide the mean $\mu$ and standard deviation $\sigma$.}
    \label{tab:datasets}
    \begin{tabu} to \linewidth {X[1.3,l,m]|X[1,r,m]|X[1,r,m]|X[1,r,m]X[1,r,m]|X[1,r,m]X[1,r,m]}
    \toprule
           {\bfseries Dataset} & \multicolumn{1}{c|}{$k$} & \multicolumn{1}{c|}{$n$} & \multicolumn{2}{c|}{$m/n^2$ (\%)} & \multicolumn{2}{c}{$\Delta/n^2$ (\%)} \\
            & & & \multicolumn{1}{c}{$\mu$} & \multicolumn{1}{c|}{$\sigma$} & \multicolumn{1}{c}{$\mu$} & \multicolumn{1}{c}{$\sigma$} \\
         \midrule
FLT & 96 & 29 & 44.06 & 11.31 & 19.67 & 20.61 \\
SCH & 17 & 242 & 5.15 & 1.48 & 3.31 & 3.68 \\
VIS & 6 & 536 & 0.28 & 0.10 & 0.21 & 0.26 \\
\bottomrule
    \end{tabu}
\end{table}

\subparagraph{Quality measures.}
We measure the quality of the resulting matrix orderings via the current standard of linear arrangement as well as with Moran's $I$. Whereas Moran's $I$ is normalized to $[-1,1]$ by definition, linear arrangement is not. We normalize a linear arrangement value $a$ to $1 - a/M$, where $M$ is the maximum value over all algorithms and graphs of the same dataset. Hence, the \emph{normalized linear arrangement} is a value in $[0,1]$, where 1 is the best performing matrix, such that in both measures, higher values correspond to higher quality.

Note that we measure our metrics for each graph in the collection separately. As such, we obtain a distribution of ordering quality, for each algorithm-dataset combination. We focus on the minimum, average and median quality. The latter two simply because we aim for high overall quality. The minimum, however, is also specifically useful since it gives us an idea of the worst matrix in the collection. Ideally, we would want to avoid slightly improving many graphs at the expense of greatly reducing the quality of one graph.

\begin{figure}[b]
    \centering
    \includegraphics{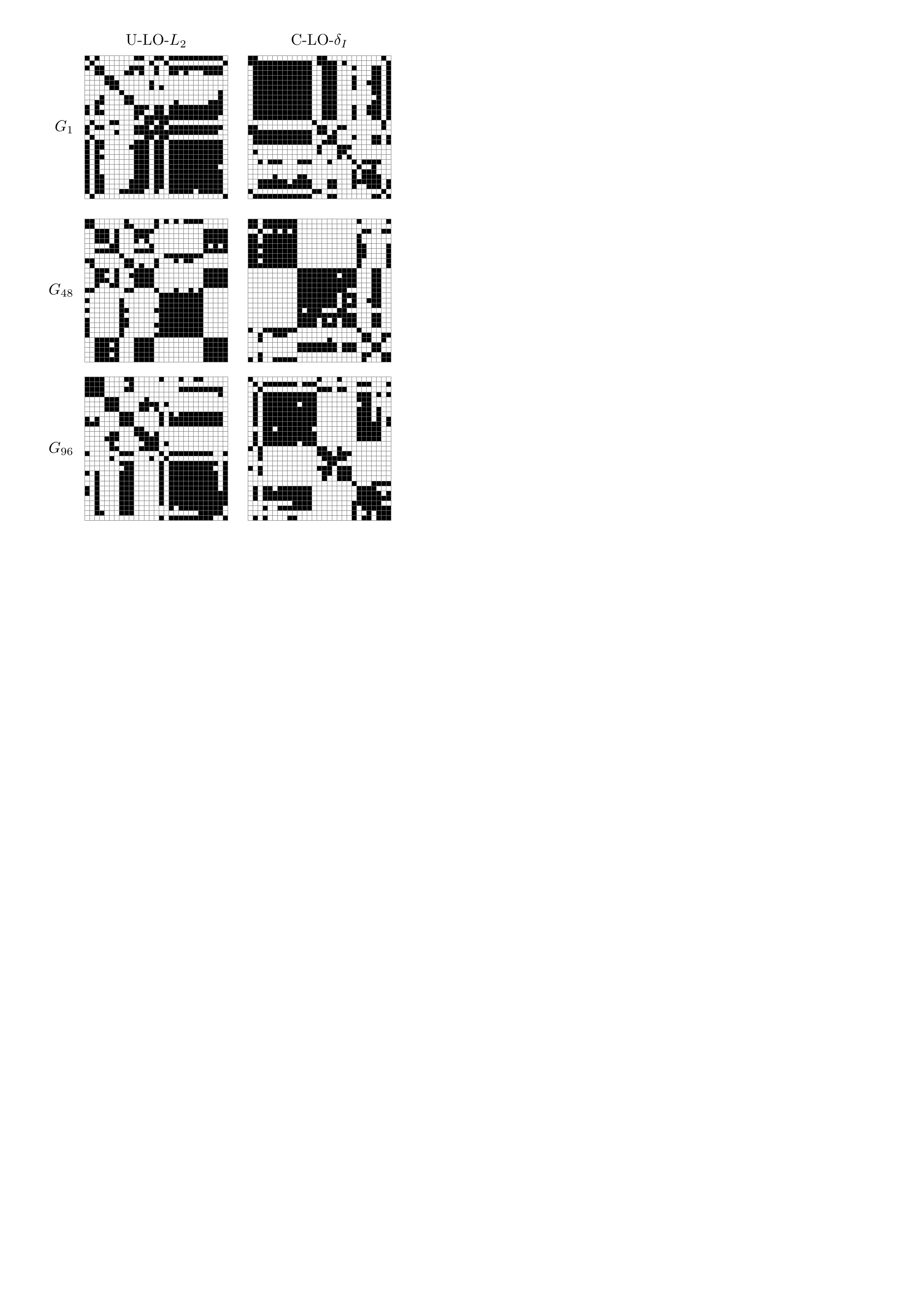}
    \caption{A comparison of the state-of-the-art (U-LO-$L_2$) against our main contribution (C-LO-$\delta_I$) on timesteps $G_1$, $G_{48}$ and $G_{96}$ of the FLT dataset.}
    \label{fig:flt-snippet}
\end{figure}

\begin{figure*}[t]
\centering
\includegraphics[page=2]{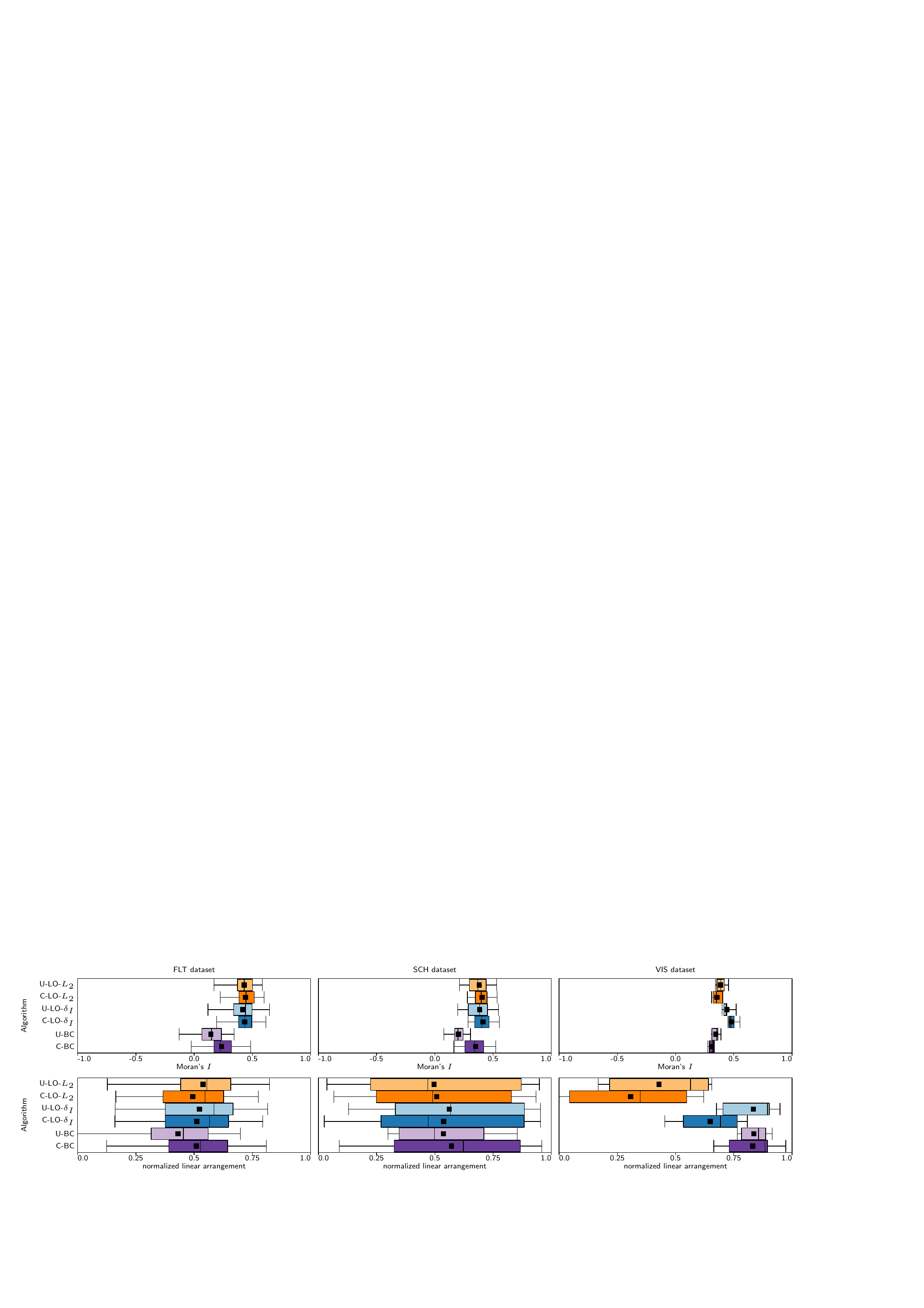}
\caption{Performance of the six main algorithms considered for each of the datasets (columns) in terms of Moran's $I$ (top row) and Linear Arrangement (bottom row). For both, higher values indicate higher quality.}
\label{fig:charts}
\end{figure*}

\subsection{Results}

Figure~\ref{fig:charts-full} in Section~\ref{app:figures} of the supplementary material shows the results for each algorithm on each dataset in terms on the two measures. Our primary observation here is that using squared versions of metrics changes the results, but does not seem to considerably change the overall performance (in terms of means, medians and minima). Refer to the tables in Section~\ref{app:figures} of the the supplementary material for precise differences; averaged over all datasets the differences do not exceed $0.02$ for Moran's $I$. A notable outlier is the differences in the VIS dataset between using $L_2^2$ and $L_2$ for collection-aware leaf order, though we do not investigate this further. 
 
Here, we focus on the six algorithms that do not use the squared distance functions. In Section~\ref{app:figures} of the supplementary material, Figures~\ref{fig:flt-results-1-7} to \ref{fig:flt-results-92-96}, we show all matrix visualizations obtained through the six ordering algorithms for the FLT dataset.
Figure~\ref{fig:flt-snippet} shows an excerpt of the FLT dataset for U-LO-$L_2$ (as proposed in \cite{bach2015multipiles}) and C-LO-$\delta_I$ (our proposed algorithm). Clearly, each of the matrices in isolation can be improved, but each column uses but one ordering by design, which must work well for all 96 graphs in the collection. We may observe that block patterns feature more strongly in the C-LO-$\delta_I$ result; the U-LO-$L_2$ result partitions these into smaller block patterns complemented by matching off-diagonal blocks. 

A summary of the results of the six algorithms are shown in Figure~\ref{fig:charts}. Below, we discuss various comparisons based on Moran's $I$, before briefly discussing linear arrangement.



\subparagraph{(U/C)-LO-$L_2$.} 
Let us first compare the performance of the union and collection-aware implementations of leaf order, using the standard Euclidean distance. Generally, we see very similar performance between these methods. For FLT and SCH, the median and average improve slightly in the collection-aware variant (difference of $0.02$ and $0.03$ for median, and $0.01$ and $0.02$ for average); the minimum improves more, $0.06$ (FLT) and $0.07$ (SCH).

For VIS we actually see a decrease, roughly $-0.03$ for each statistic. We attribute this to the overall sparsity of the graphs. The Euclidean distance may weigh white-white adjacencies too heavily, whereas they contribute little to improving Moran's $I$. Using a union gives for a slightly denser graph and thus may help in finding some structure in this specific case.

In this comparison, we conclude that there indeed is some information loss in the union approach. The medium-density graph improves most, hinting that very dense graphs and very sparse graphs suffer less from the information loss that complementary patterns may give -- though this may benefit more structural investigation using more (controlled) datasets.

\subparagraph{(U/C)-LO-$\delta_I$.} 
Let us now repeat the above comparison, but with algorithms based on the Moran's $I$ metric $\delta_I$. We now see an improvement for each dataset. On average, the minimum Moran's $I$ improves by $0.07$ and the median and average by $0.02$ and $0.03$ respectively. Most different from the previous comparison is the improvement for VIS. From this we conclude that the collection-aware approach actually does help to make find structures in very sparse graphs. That is, the decrease in performance can be attributed to the use of the Euclidean distance rather than the collection-aware approach itself.

\subparagraph{C-LO-($L_2$/$\delta_I$).} 
From the above, we may conclude that our collection-aware adaptations yield benefits in terms of Moran's $I$. So, we now briefly consider the choice of metrics between the collection-aware implementations. We see that on average (in terms of minimum, average and median) using $\delta_I$ improves Moran's $I$ by $0.04$ compared to using $L_2$. Some improvement was to be expected, since the method now (heuristically) optimizes the quality measure directly. Yet, the effect is somewhat small in that light. 

Primarily, we can conclude that the Euclidean distance serves well as a proxy. We see that this proxy suffers in case of very sparse graphs (VIS), where using $\delta_I$ improves considerably more (differences of $0.12$ to $0.14$) than in the other datasets. The equal treatment of types of adjacencies in the Euclidean distance does seem to hinder it, in obtaining high Moran's $I$.

\subparagraph{(U/C)-BC.} 
Let us now briefly turn to the barycenter method. We see here again a strong difference between datasets. Whereas C-BC clearly outperforms U-BC (differences of $0.08$ to $0.15$) for FLT and SCH, the opposite it true for VIS. Again, we can attribute this to the sparsity of the graph. As many vertices have edges in only one graph, their ranks in the collection-aware method are based primarily on that one graph -- which may interfere with other graphs. With the union approach, at least everything is treated centrally and thus the information in the slightly denser union graph is indeed a bit more rich.

\subparagraph{Linear arrangement.}
Though it focuses on block patterns only and does not generally measure the degree of structure of the visualization, linear arrangement is the de-facto standard for automatically measuring ordering quality \cite{behrisch2016matrix}. We briefly consider it for this reason. 

For FLT, we may observe that the U-LO variants outperform C-LO approaches, though the performance between C-LO variants is mostly similar. For the barycenter method, this is rather the other way around: C-BC outperforms U-BC.

For SCH, we see more spread of the linear arrangement within one graph collection, but this may be an effect of the normalization. Though U-LO-$\delta_I$ still slightly outperforms C-LO-$\delta_I$, we now see that C-LO-$L_2$ offers a slight improvement on U-LO-$L_2$. C-BC still outperforms U-BC in terms of median and average, though its minimum is now considerably lower.

For VIS, we again see that the U-LO variants outperform C-LO approaches, now with a considerable difference. Interestingly, C-BC is now slightly under U-BC as well. We may attribute this again to the sparsity of the dataset.

We conclude that, for linear arrangement, the union approach is typically better than the collection-aware approach to leaf order. For the barycenter method, however, this appears to be opposite, unless the dataset is excessively sparse. This is, perhaps, not surprising since the crossing measure underlying the barycenter methods discourages ``long edges'' and thus off-diagonal cells. With the collection-aware approach, we see that this indeed avoids information loss, unless there is very little information to begin with -- in which case the union method strengthens the signal of the little information that is available.

\subparagraph{Overall comparison.}
Considering the above, we may conclude that our collection-aware adaptations have been successful in improving Moran's $I$ and thereby the structure of the resulting matrix visualizations. We observe that C-LO-$\delta_I$ performs best over all methods, with higher minimum, median and average scores compared to all other methods. It specifically performs better on the minimum compared to other LO variants, and considerably improves upon BC variants.
Whereas the barycenter method focuses (indirectly) more on block structures and thus linear arrangement, we do not see it consistently outperforming the leaf order methods even in this criterion.
We would thus recommend the use of C-LO-$\delta_I$ as the most versatile algorithm for solving the simultaneous ordering problem.

\section{Conclusion}
\label{sec:conclusion}

In this paper we considered the problem of computing simultaneous orderings for graph collections. That is, given a set of graphs, compute a single ordering that works well for visualizing each graph as a matrix.
To automatically assess quality, we observed that patterns in the matrix can be seen as a form of spatial auto-correlation and thus proposed the use of Moran's $I$ as a global measure of quality. Moran's $I$ readily implies a distance metric between two vertices that can be used in algorithms such as leaf order, as an alternative to other common functions that measure adjacencies or neighborhood similarity.

Algorithmically, computing simultaneous orderings has received little attention. We generalized the \emph{union} approach that is found in MultiPiles, but observed that this may lead to hiding structural information from the ordering algorithms. Instead, we proposed a generic \emph{collection-aware} approach that avoids such loss of information and showed how to apply this approach to the common leaf order and barycenter methods.

Our experiments demonstrate that our collection-aware approach is effective, especially leaf order based on the Moran's $I$-inspired metric.
Our collection-aware leaf ordering method using $\delta_I$ is the most versatile and consistently performs equally or better than the other algorithms, though the magnitude of improvement varies between datasets.
For other algorithm comparisons, there is less consistency in relative performance, also interacting with dataset.
Our results confirm that the potential information loss as sketched for the union approach indeed occurs, also in real datasets. Though it does not occur to such a degree to really cause arbitrary orderings as in hypothetical constructed cases, it nonetheless leads to inferior orderings when not accounting for the collection of graphs in the ordering algorithms.

\subparagraph{Future work.}
There is not a clear metric, the optimization of which gives visually the best or most useful ordering, even for a single graph. Though we postulate that Moran's $I$ can be effective here, further research is needed on how this concept actually relates to perceiving structure in matrix visualizations. Our focus was with algorithm-compatible measures, but such future work can likely leverage the work on magnostics \cite{behrisch2016magnostics}. Furthermore, it may be worthwhile to investigate the perceptual effects that improving Moran's $I$ brings; for example, Beecham \etal~\cite{beecham2016map} study the ``just noticeable difference'' of Moran's $I$ and their experiment includes weight grids as well. It suggests a degree to which Moran's $I$ must improve for an observer to reliably identify the improved ordering, though in our case, we have black-and-white and symmetric matrices, which may influence the perception. 

With our work, we provide (to the best of our knowledge) the first explicit definition of the simultaneous ordering problem for visualizing graphs using matrices. Our general concept of making algorithms collection-aware is applicable to other alternatives. Especially distance-based approaches like leaf order are easily amended. But we could also consider other ways in making algorithms collection-aware, and indeed, some algorithms may be more suitable than others for such adaptation. Can we, for example, in fact directly modify the clustering algorithm in leaf ordering, or change the objective function in its final stage, to explicitly optimize e.g., the smallest sum?

Finally, our experiments show differences depending on the number of graphs and vertices in the dataset. Whereas for a few matrices (or a few MultiPiles) a single ordering can be effective, when using many matrices, one is bound to have to make concessions for many of them. As such, the question is perhaps, whether we could permit small changes to the ordering for each matrix, to better highlight the structure in each graph. In our next steps, we plan to address this challenge of ``stable'' orderings.

To conclude, perhaps the main take-away of our work here is that simultaneous matrix ordering is a complex algorithmic-visualization problem. We have now taken the first steps in bringing this to the fore and addressing the algorithmic challenges.





\putbib
\end{bibunit}
\flushcolsend

\clearpage
\begin{bibunit}
\appendix

\begin{figure*}[t]
\begin{minipage}{\textwidth}
\centering
\makeatletter
\medskip
{\sffamily\huge\vgtc@sectionfont%
Simultaneous Matrix Orderings for Graph Collections\\
\medskip
\bfseries\LARGE Supplementary Material\\}
\vspace{14pt}
{\large\sffamily\vgtc@sectionfont%
Nathan van Beusekom, Wouter Meulemans, and Bettina Speckmann\\}
\vspace{10pt}
\makeatother
\includegraphics{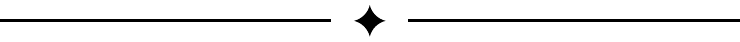}
\end{minipage}
\end{figure*}

\setcounter{figure}{0}
\setcounter{table}{0}

\section{Moran's \emph{I} for matrix visualization}
\label{app:moransi}

We here present further details on Moran's $I$ and its use for assessing matrix visualizations of graphs. In Section~\ref{sec:derivation} we present the full derivation of our simplified form of Moran's $I$, starting from its general form. For the most part, our derivation here is structurally the same as the one presented in the main paper -- but we present it in full nonetheless, to make this supplementary material more accessible.

In Section~\ref{sec:distinct} we consider distinct orderings for rows and columns in more detail. This allows us to further reason about the behavior of Moran's $I$ in context of matrix visualizations.

\subsection{Derivation}\label{sec:derivation}

Consider a graph $G = (V,E)$ on $n$ vertices with some ordering $\rho$ for its rows and a possibly different ordering $\sigma$ for its columns. This combination of graph and two orderings implies a $n \times n$ 0-1 matrix $M$ where $M_{ij} = 1$ if $(\rho(i),\sigma(j)) \in E$ and 0 otherwise. The cells of $M$ are interpreted as the spatial units for the sake of Moran's $I$, and contains all necessary information to derive Moran's $I$. We hence omit dependencies on $G$, $\rho$ and $\sigma$ in our derivation here, for sake of notational simplicity. 

We use slightly different notation than perhaps conventional for the general form of Moran's $I$, so as to distinguish between the general form and our simplified form. Specifically, we use $r$ instead of $N$ to denote the number of spatial units (regions) and use a weights matrix~$T$ (topology) with sum $t$ instead of $w$ with sum $W$. Moreover, we refer to the regions using indices $a$ and $b$ rather than $i$ and $j$.

Moran's $I$ is defined over $r$ spatial units, which have associated values $x_a$ for $a \in \{1, \ldots, r\}$. Furthermore, an $r\times r$ matrix $T$ encodes the weights for (typically neighboring) regions: entry $T_{ab}$ is the weight between region $a$ and $b$. We use $t$ to denote the sum over all weights in~$T$. Moreover, let $\overline{x}$ denote the average value $\frac{\sum_{a=1}^r x_a}{r}$. The general form of Moran's $I$ is as follows \cite{moran1950notes}:

\[ I = \frac{r}{t} \cdot \frac{\sum_{a=1}^{r}\sum_{b=1}^{r} T_{ab} (x_a - \overline{x})(x_b - \overline{x})}{\sum_{a=1}^{r} (x_a - \overline{x})^2} \]

Cells of $M$ represent the spatial units and thus we have $r = n^2$.
Note that we can directly map between region indices $a$ to matrix indices $(i,j)$ using $a = i+n \cdot j$.
Furthermore, we observe that $\sum_{a=1}^r x_a$ thus becomes $\sum_{i=1}^n\sum_{j=1}^n M_{ij}$ which simplifies to $m$, the total number of entries with value 1 in $M$. Note that $m$ matches the number of edges in $G$ in a directed model: that is, an undirected edge $\{u,v\}$ with $u \neq v$ is interpreted as occurring twice in $E$, once as $(u,v)$ and once as $(v,u)$.
Hence, $\overline{x} = \frac{m}{n^2}$ and we can rewrite $x_a - \overline{x} = \frac{x_a n^2 - m}{n^2}$. We now obtain the following expression:

\begin{align*} 
I &= \frac{n^2}{t} \cdot \frac{\sum_{a=1}^{r}\sum_{b=1}^{r} T_{ab}\frac{x_a n^2 - m}{n^2} \cdot \frac{x_b n^2 - m}{n^2}}{\sum_{a=1}^{r} (\frac{x_a n^2 - m}{n^2})^2} \\
 &= \frac{n^2}{t} \cdot \frac{\sum_{a=1}^{r}\sum_{b=1}^{r} T_{ab} (x_a n^2 - m)(x_b n^2 - m)}{\sum_{i=1}^{r} (x_a n^2 - m)^2} 
\end{align*}
We model Rook's adjacency in $T$. Thus, specifically, weight $T_{ab}$ is 1 if region $a$ and $b$ are adjacent cells of matrix $M$. That is, $\sum_{a=1}^{r}\sum_{b=1}^{r} T_{ab} (x_a n^2 - m)(x_b n^2 - m)$ effectively adds up $(x_a n^2 - m)(x_b n^2 - m)$ for horizontally or vertically adjacent regions $a$ and $b$. Since region values are either 0 or 1, we thus distinguish three cases: 
\begin{enumerate}
    \item[$B$)] If $x_a = x_b = 1$, then $(x_a n^2 - m)(x_b n^2 - m)$ simplifies to $(n^2 - m)^2$. We denote this case with $B$, since it refers to two black cells being adjacent in the visualization.
    \item[$W$)] If $x_a = x_b = 0$, then$(x_a n^2 - m)(x_b n^2 - m)$ simplifies to $(-m)^2 = m^2$. We denote this case with $W$, since it refers to two white cells being adjacent in the visualization.
    \item[$D$)] If $x_a \neq x_b$, one of the values is 0 and the other 1; observing the symmetry in the formula, $(x_a n^2 - m)(x_b n^2 - m)$ simplifies to $-(n^2 - m)m$. We denote this case with $D$, since it refers to two differently colored cells being adjacent in the visualization.
\end{enumerate}
We thus can simplify this summation by simply counting the number of occurrences of each case, and we identify $B$, $W$ and $D$ with this count. We observe that in total, there are $2n(n-1)$ adjacencies. Since Rook's adjacency gives a symmetric matrix $T$, we count every adjacency twice, but also we find that $t = 4n(n-1)$. Hence, $\sum_{a=1}^{r}\sum_{b=1}^{r} T_{ab} (x_a n^2 - m)(x_b n^2 - m)$ simplifies to $2 B \cdot (n^2 - m)^2 + 2 W\cdot m^2 - 2 D\cdot (n^2 - m)m$. As such, we get the following form:
\[ I = \frac{n^2}{4n(n-1)} \cdot \frac{ 2 B \cdot (n^2 - m)^2 + 2 W\cdot m^2 - 2 D\cdot (n^2 - m)m}{\sum_{i=1}^{r} (x_a n^2 - m)^2} \]
We can apply the same principle to the denominator $\sum_{i=1}^{r} (x_a n^2 - m)^2$, by counting the number of cells for which $x_a = 1$ and for which it is 0. The former case is simply $m$ and the latter the remaining $n^2 - m$ cells. Thus, we get $m \cdot (n^2 - m)^2 + (n^2-m) \cdot (-m)^2 = m \cdot ((n^2 - m)(n^2 - m) + (n^2 - m)m) = m \cdot ((n^2 - m)(n^2 - m + m) = m(n^2 -m)n^2 = n^2 m (n^2 - m)$. Filling this into the overall formula, we get:
\begin{align*}
    I &= \frac{n^2}{4n(n-1)} \cdot \frac{2 B \cdot (n^2 - m)^2 + 2 W\cdot m^2 - 2 D\cdot (n^2 - m)m}{n^2 m (n^2 - m)} \\ 
    &= \frac{2 n^2}{4n(n-1)n^2 m (n^2 - m)} \cdot ( B \cdot (n^2 - m)^2 + W\cdot m^2 - D\cdot (n^2 - m)m)\\
    &= \frac{1}{2n(n-1) m (n^2 - m)} \cdot ( B \cdot (n^2 - m)^2 + W\cdot m^2 - D\cdot (n^2 - m)m)
\end{align*}
Using a shorthand $N = 2n(n-1) m (n^2 - m)$ and our earlier observation, that $B+W+D = 2n(n-1)$ and thus $D = 2n(n-1) - B- W$, we derive:
\begin{align*}
I &= \frac{1}{N} \cdot ( B \cdot (n^2 - m)^2 + W\cdot m^2 - D\cdot (n^2 - m)m) \\
 &= \frac{1}{N} \cdot ( B \cdot (n^2 - m)^2 + W\cdot m^2 - (2n(n-1) - B- W) \cdot (n^2 - m)m) \\
 &= \frac{1}{N} \cdot ( B \cdot ((n^2 - m)^2 + (n^2 - m)m)  + W \cdot (m^2 + (n^2 - m)m) \\
 & \hspace{4em} - 2n(n-1) (n^2 - m)m) \\
 &= \frac{1}{N} \cdot ( B \cdot ((n^2 - m)^2 + (n^2 - m)m)  + W \cdot (m^2 + (n^2 - m)m) - N) \\
 &= \frac{1}{N} \cdot ( B \cdot ((n^2 - m)^2 + (n^2 - m)m)  + W \cdot (m^2 + (n^2 - m)m)) - 1 
\end{align*}
We observe that $(n^2 - m)^2 + (n^2 - m)m = (n^2 - m)(n^2 - m+m) = (n^2 - m)n^2$ and  $(m^2 + (n^2 - m)m = m(m + n^2 - m) = m n^2$. We can thus simplify to the following:
\[ I = B \cdot \frac{(n^2 - m)n^2}{N}  + W \cdot \frac{m n^2}{N} - 1 \]
Defining $c_B$ as $\frac{(n^2 - m)n^2}{N} = \frac{(n^2 - m)n^2}{2n(n-1) m (n^2 - m)} = \frac{n}{2(n-1) m}$ and $c_W$ as $\frac{m n^2}{N} = \frac{m n^2}{2n(n-1) m (n^2 - m)} = \frac{n}{2 (n-1) (n^2 - m)}$, we obtain the following simplified formula.
\[ I = c_B \cdot B + c_W \cdot W - 1 \]
\smallskip

\subsection{The effect of distinct orderings}\label{sec:distinct}

Let us now turn towards understanding the relation between Moran's~$I$ and the use of distinct row and column orderings.
We have so far dropped the dependency on the graph $G$ and orderings $\rho$ (rows) and $\sigma$ (columns): after all, such a triple determines $M$ completely, which in turn determines the value of $I$ as expressed in the previous section.
However, to compare different orderings, we need to extend our notation to be able to distinguish between the value of Moran's $I$ for different orderings.
So, we make this explicit by denoting the value of Moran's $I$ for the matrix visualization of a graph $G$ using row ordering $\rho$ and column ordering $\sigma$ by $I(G,\rho,\sigma)$. When $\rho = \sigma$, we use $I(G,\rho)$.

We observe that $c_B$ and $c_W$ depend only on $m$ and $n$ which are fully defined by $G$ and thus independent of the orderings. In counting the number of adjacencies $B$ and $W$, the orderings do matter. However, we observe that, in Rook's adjacency, adjacencies are either within the same row, or within the same column. The row ordering $\rho$ has no influence on adjacencies within a row and, analogously, column ordering $\sigma$ has no influence on adjacencies within a column. We can thus count these independently, and we use $B_\textrm{v}(\rho)$ and $B_\textrm{h}(\sigma)$ to denote the vertical and horizontal black-black adjacencies respectively. Analogously, $W_\textrm{v}(\rho)$ and $W_\textrm{h}(\sigma)$ to count vertical and horizontal white-white adjacencies. Applying these observations, we thus obtain the form as used in the main paper:
\begin{align*}
I(G,\rho,\sigma) 
&= c_B(G) \cdot (B_\textrm{h}(G,\rho) + B_\textrm{v}(G,\sigma)) \\
&+ c_W(G) \cdot (W_\textrm{h}(G,\rho) + W_\textrm{v}(G,\sigma)) \\ 
&- 1 
\end{align*}
The above indicates that we may count horizontal and vertical adjacencies independently and as such the row and column ordering are independently influencing Moran's $I$. As a result, for undirected graphs, the score of any matrix using two distinct orderings is the average of using either ordering for both rows and columns (see the proof below). This implies that for undirected graphs it is in fact optimal according to Moran's $I$ to use the same ordering.

\begin{lemma}\label{lem:twoOrders}
For an undirected graph $G = (V,E)$ with two orderings $\rho$ and $\sigma$, $I(G,\rho,\sigma) = \frac{1}{2} (I(G,\rho) + I(G,\sigma) )$.
\end{lemma}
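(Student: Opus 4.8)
The plan is to build directly on the decomposed form of Moran's $I$ established immediately above the lemma, in which the black-black and white-white adjacency counts split into a vertical part governed solely by the row ordering and a horizontal part governed solely by the column ordering, while the coefficients $c_B(G)$ and $c_W(G)$ depend only on $n$ and $m$. Writing $B_\textrm{v}(\tau), W_\textrm{v}(\tau)$ for the vertical counts under row ordering $\tau$ and $B_\textrm{h}(\tau), W_\textrm{h}(\tau)$ for the horizontal counts under column ordering $\tau$, the decomposition reads
\[ I(G,\rho,\sigma) = c_B(G)\,(B_\textrm{v}(\rho) + B_\textrm{h}(\sigma)) + c_W(G)\,(W_\textrm{v}(\rho) + W_\textrm{h}(\sigma)) - 1. \]

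The key step, and the only place where undirectedness enters, is to show that for a single ordering $\tau$ used on both axes the vertical and horizontal counts coincide: $B_\textrm{v}(\tau) = B_\textrm{h}(\tau)$ and $W_\textrm{v}(\tau) = W_\textrm{h}(\tau)$. Since $G$ is undirected, the matrix $M$ obtained from $\tau$ on both rows and columns satisfies $M_{ij} = M_{ji}$, because $(\tau(i),\tau(j)) \in E \iff (\tau(j),\tau(i)) \in E$; that is, $M$ is symmetric. Transposition therefore sends each vertical adjacency between cells $(i,j)$ and $(i+1,j)$ to the horizontal adjacency between $(j,i)$ and $(j,i+1)$ via the bijection $(i,j) \mapsto (j,i)$, and by symmetry the two cells carry exactly the same pair of colors. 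This bijection preserves the color class of each adjacency, so the vertical and horizontal black-black counts are equal, and likewise for white-white. Denote the common values by $B(\tau) := B_\textrm{v}(\tau) = B_\textrm{h}(\tau)$ and $W(\tau) := W_\textrm{v}(\tau) = W_\textrm{h}(\tau)$.

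With this in hand the computation is immediate. Applying the decomposition with $\rho = \sigma = \tau$ gives $I(G,\tau) = 2c_B(G)\,B(\tau) + 2c_W(G)\,W(\tau) - 1$ for each of $\tau \in \{\rho,\sigma\}$. For the mixed case, the vertical count under row ordering $\rho$ is $B_\textrm{v}(\rho) = B(\rho)$ and the horizontal count under column ordering $\sigma$ is $B_\textrm{h}(\sigma) = B(\sigma)$, each count being insensitive to the ordering on the other axis, so the decomposition becomes $I(G,\rho,\sigma) = c_B(G)\,(B(\rho)+B(\sigma)) + c_W(G)\,(W(\rho)+W(\sigma)) - 1$. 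Averaging the two single-ordering expressions yields precisely this, which establishes the claim. The main obstacle is the symmetry argument of the previous paragraph: one must verify carefully that transposition is a color-preserving bijection between vertical and horizontal adjacencies, so that the counts genuinely coincide; once that is settled, the remainder is a one-line substitution.
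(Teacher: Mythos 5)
Your proposal is correct and follows essentially the same route as the paper's proof: both rest on the observation that an undirected graph with a single ordering yields a symmetric matrix, so vertical and horizontal adjacency counts coincide, after which the additive decomposition of $I(G,\rho,\sigma)$ into a $\rho$-part and a $\sigma$-part makes the averaging identity a one-line substitution. The only difference is that you spell out the transposition bijection explicitly where the paper states the equality of counts as an immediate consequence of symmetry.
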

\begin{proof}
The main observation is that our graph is undirected. That is, using the same ordering gives a symmetric matrix: the number of adjacencies of any type is then the same between the rows and the columns.
That is, $B_\textrm{h}(G,\pi) = B_\textrm{v}(G,\pi)$ and $W_\textrm{h}(G,\pi) = W_\textrm{v}(G,\pi)$ hold for any ordering $\pi$. As such, we omit the subscript from now on.

From the definition of Moran's $I$ in our context, it is clear that the horizontal and vertical order independently influence the score. Indeed, using the observation above, we can rewrite  Moran's $I$ to a sum of two terms, one of which is defined wholly by $\rho$ and the other by $\sigma$. That is, $I(G,\rho,\sigma) = f(G,\rho) + f(G,\sigma) - 1$ where $f(G,\pi) =  c_B(G) B(G,\pi) + c_W(G) W(G,\pi)$.

With this notation, we find that $I(G,\rho) = 2 f(G,\rho)$ and $I(G,\sigma) = 2 f(G,\sigma)$. 
Rewriting now gives us $I(G,\rho,\sigma) = I(G,\rho)/2 + I(G,\sigma)/2 = \frac{1}{2} (I(G,\rho) + I(G,\sigma))$. In other words, using the distinct orderings gives the average quality of using the two same-ordering options.
\end{proof}
Note that this lemma does rely on our use of Rook's adjacency and $G$ being an undirected graph. Figure~\ref{fig:counter} illustrates a counterexample for each condition: if the graph is directed or Moran's $I$ is used with Queen's adjacency, using distinct orderings may lead to a lower Moran's $I$ than using either ordering for both rows and columns.

We further note that the argument in Lemma~\ref{lem:twoOrders} does not rely on the specific constants used: indeed any measure that is a linear combination of items that are independent between row and column orderings would satisfy this lemma. Examples include simply counting the number of black-black adjacencies. Indeed, similar observations have been made by Lenstra and Kan~\cite{LenstraKan1975me} for the measure of effectiveness.

\begin{figure}
    \centering
    \includegraphics{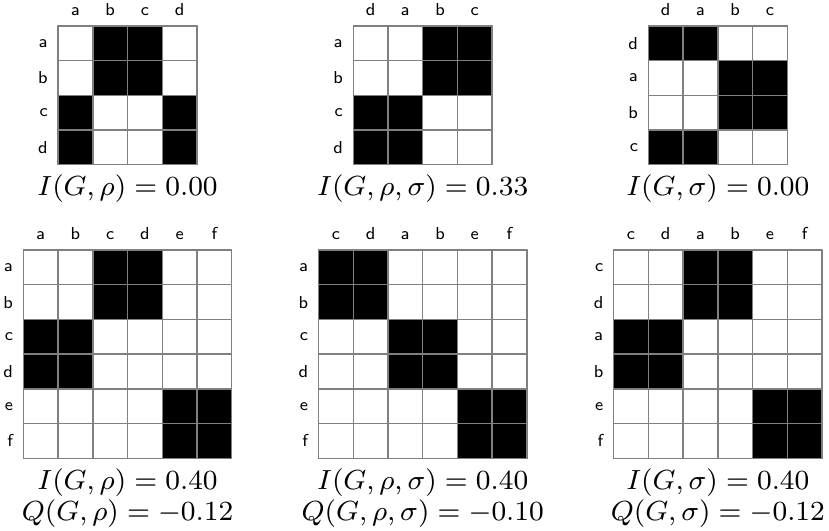}
    \caption{Top row: for this directed graph $G$, $I(G,\rho,\sigma)$ (middle) is larger than $I(G,\rho)$ and $I(G,\sigma)$; note that, as $G$ is directed, using the same ordering results in an asymmetric matrix. Bottom row: using Moran's $I$ with Queen's adjacency (denoted here as function $Q$) may result in distinct orderings (middle) being better than using the same ordering, even for an undirected graph.}
    \label{fig:counter}
\end{figure}

\begin{figure*}[t]
\centering
\includegraphics[page=4]{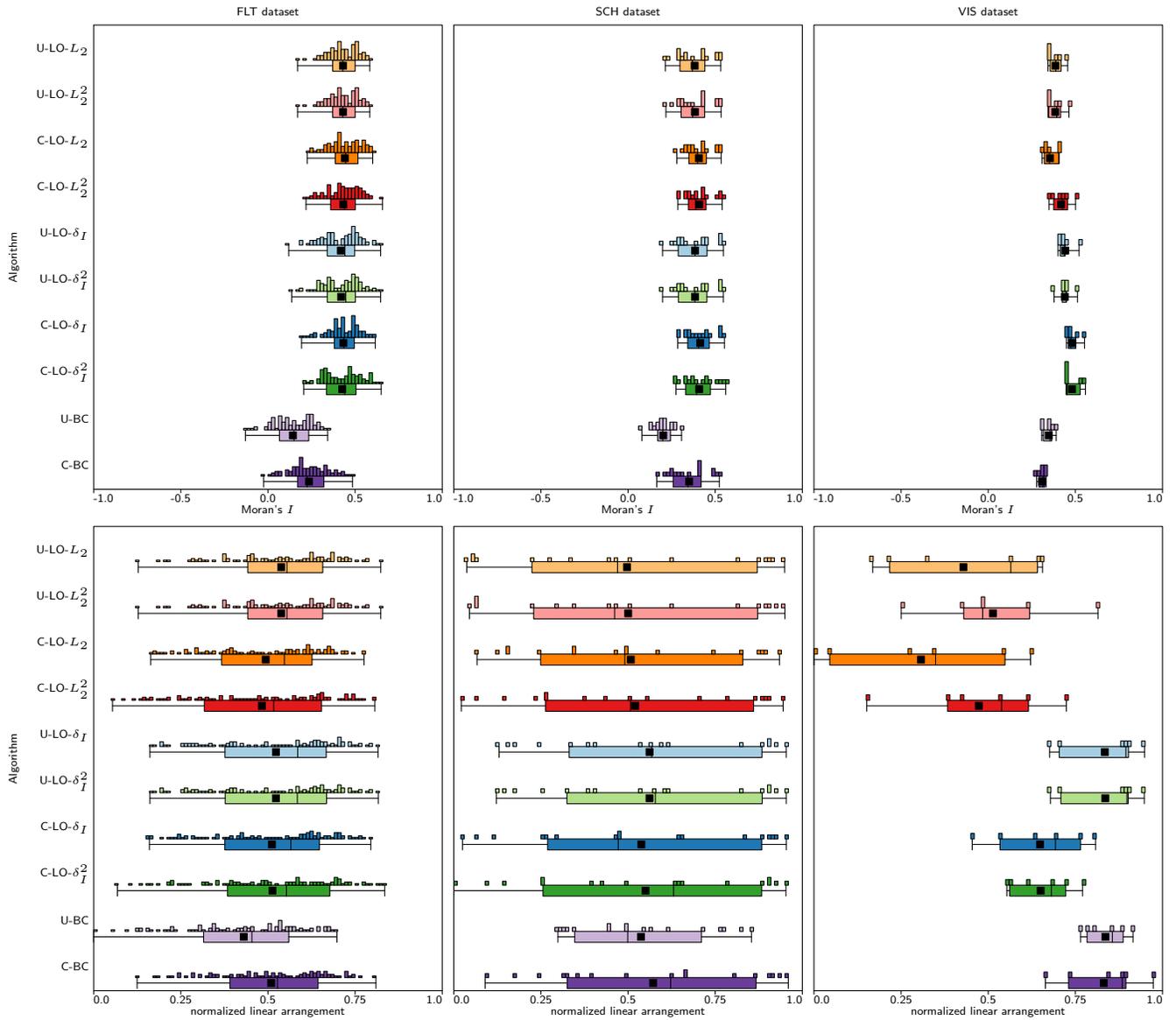}
\caption{Performance of all ten algorithms for each of the datasets (columns) in terms of Moran's $I$ (top row) and Linear Arrangement (bottom row). For both, higher values indicate higher quality. Main observation is that squaring the distance function has little influence.}
\label{fig:charts-full}
\end{figure*}

\section{Additional figures and tables}
\label{app:figures}

Figure~\ref{fig:charts-full} shows the performance of all ten algorithms, including the leaf order variants that are using a squared distance measure.
In Tables~\ref{tab:diff_ALL_MI} through~\ref{tab:diff_VIS_MI}, we provide the differences in minimum, median and average Moran's $I$ between each pair of algorithms for all datasets as well as per dataset. Figures~\ref{fig:flt-results-1-7} to \ref{fig:flt-results-92-96} show the matrix visualizations based on the ordering computed by each of the six main algorithms on the FLT dataset: each algorithm is a column, spread over multiple pages; each row corresponds to one of the 96 graphs in the collection.

\putbib
\end{bibunit}
\flushcolsend

\begin{table*}[t]
\centering
\caption{Differences in minimum, median and average Moran's $I$, averaged over all datasets. A positive value implies that the algorithm for column outperforms the algorithm for the row.}
\label{tab:diff_ALL_MI}
\begin{tabu} to \linewidth {X[1,l,m]|X[1,r,m]X[1,r,m]|X[1,r,m]X[1,r,m]|X[1,r,m]X[1,r,m]|X[1,r,m]X[1,r,m]|X[1,r,m]|X[1,r,m]}
\toprule
{\bfseries minimum} & {U-LO-$L_2$} & {U-LO-$L_2^2$} & {C-LO-$L_2$} & {C-LO-$L_2^2$} & {U-LO-$\delta_I$} & {U-LO-$\delta_I^2$} & {C-LO-$\delta_I$} & {C-LO-$\delta_I^2$} & {U-BC} & {C-BC} \\
 \midrule
 {U-LO-$L_2$} &  & 0.00 & 0.03 & 0.04 & -0.00 & -0.00 & 0.07 & 0.07 & -0.16 & -0.10 \\
 {U-LO-$L_2^2$} & -0.00 &  & 0.03 & 0.04 & -0.00 & -0.01 & 0.07 & 0.07 & -0.16 & -0.10 \\
 \midrule
 {C-LO-$L_2$} & -0.03 & -0.03 &  & 0.01 & -0.03 & -0.03 & 0.04 & 0.04 & -0.19 & -0.13 \\
 {C-LO-$L_2^2$} & -0.04 & -0.04 & -0.01 &  & -0.05 & -0.05 & 0.02 & 0.02 & -0.20 & -0.15 \\
 \midrule
 {U-LO-$\delta_I$} & 0.00 & 0.00 & 0.03 & 0.05 &  & -0.00 & 0.07 & 0.07 & -0.15 & -0.10 \\
 {U-LO-$\delta_I^2$} & 0.00 & 0.01 & 0.03 & 0.05 & 0.00 &  & 0.07 & 0.07 & -0.15 & -0.10 \\
 \midrule
 {C-LO-$\delta_I$} & -0.07 & -0.07 & -0.04 & -0.02 & -0.07 & -0.07 &  & -0.00 & -0.22 & -0.17 \\
 {C-LO-$\delta_I^2$} & -0.07 & -0.07 & -0.04 & -0.02 & -0.07 & -0.07 & 0.00 &  & -0.22 & -0.17 \\
 \midrule
 {U-BC} & 0.16 & 0.16 & 0.19 & 0.20 & 0.15 & 0.15 & 0.22 & 0.22 &  & 0.05 \\
 \midrule
 {C-BC} & 0.10 & 0.10 & 0.13 & 0.15 & 0.10 & 0.10 & 0.17 & 0.17 & -0.05 &  \\
\bottomrule
\end{tabu}

\medskip
\begin{tabu} to \linewidth {X[1,l,m]|X[1,r,m]X[1,r,m]|X[1,r,m]X[1,r,m]|X[1,r,m]X[1,r,m]|X[1,r,m]X[1,r,m]|X[1,r,m]|X[1,r,m]}
\toprule
{\bfseries median} & {U-LO-$L_2$} & {U-LO-$L_2^2$} & {C-LO-$L_2$} & {C-LO-$L_2^2$} & {U-LO-$\delta_I$} & {U-LO-$\delta_I^2$} & {C-LO-$\delta_I$} & {C-LO-$\delta_I^2$} & {U-BC} & {C-BC} \\
 \midrule
 {U-LO-$L_2$} &  & 0.00 & 0.00 & 0.02 & 0.03 & 0.03 & 0.04 & 0.03 & -0.16 & -0.10 \\
 {U-LO-$L_2^2$} & -0.00 &  & 0.00 & 0.02 & 0.02 & 0.03 & 0.04 & 0.03 & -0.16 & -0.10 \\
 \midrule
 {C-LO-$L_2$} & -0.00 & -0.00 &  & 0.02 & 0.02 & 0.03 & 0.04 & 0.03 & -0.16 & -0.10 \\
 {C-LO-$L_2^2$} & -0.02 & -0.02 & -0.02 &  & 0.00 & 0.01 & 0.02 & 0.01 & -0.18 & -0.12 \\
 \midrule
 {U-LO-$\delta_I$} & -0.03 & -0.02 & -0.02 & -0.00 &  & 0.00 & 0.02 & 0.01 & -0.19 & -0.13 \\
 {U-LO-$\delta_I^2$} & -0.03 & -0.03 & -0.03 & -0.01 & -0.00 &  & 0.01 & 0.00 & -0.19 & -0.13 \\
 \midrule
 {C-LO-$\delta_I$} & -0.04 & -0.04 & -0.04 & -0.02 & -0.02 & -0.01 &  & -0.01 & -0.20 & -0.14 \\
 {C-LO-$\delta_I^2$} & -0.03 & -0.03 & -0.03 & -0.01 & -0.01 & -0.00 & 0.01 &  & -0.19 & -0.13 \\
 \midrule
 {U-BC} & 0.16 & 0.16 & 0.16 & 0.18 & 0.19 & 0.19 & 0.20 & 0.19 &  & 0.06 \\
 \midrule
 {C-BC} & 0.10 & 0.10 & 0.10 & 0.12 & 0.13 & 0.13 & 0.14 & 0.13 & -0.06 &  \\
\bottomrule
\end{tabu}

\medskip
\begin{tabu} to \linewidth {X[1,l,m]|X[1,r,m]X[1,r,m]|X[1,r,m]X[1,r,m]|X[1,r,m]X[1,r,m]|X[1,r,m]X[1,r,m]|X[1,r,m]|X[1,r,m]}
\toprule
{\bfseries average} & {U-LO-$L_2$} & {U-LO-$L_2^2$} & {C-LO-$L_2$} & {C-LO-$L_2^2$} & {U-LO-$\delta_I$} & {U-LO-$\delta_I^2$} & {C-LO-$\delta_I$} & {C-LO-$\delta_I^2$} & {U-BC} & {C-BC} \\
 \midrule
 {U-LO-$L_2$} &  & 0.00 & 0.00 & 0.02 & 0.02 & 0.02 & 0.04 & 0.04 & -0.17 & -0.10 \\
 {U-LO-$L_2^2$} & -0.00 &  & 0.00 & 0.02 & 0.01 & 0.01 & 0.04 & 0.04 & -0.17 & -0.10 \\
 \midrule
 {C-LO-$L_2$} & -0.00 & -0.00 &  & 0.02 & 0.01 & 0.01 & 0.04 & 0.04 & -0.17 & -0.10 \\
 {C-LO-$L_2^2$} & -0.02 & -0.02 & -0.02 &  & -0.01 & -0.01 & 0.02 & 0.02 & -0.19 & -0.12 \\
 \midrule
 {U-LO-$\delta_I$} & -0.02 & -0.01 & -0.01 & 0.01 &  & -0.00 & 0.03 & 0.02 & -0.18 & -0.12 \\
 {U-LO-$\delta_I^2$} & -0.02 & -0.01 & -0.01 & 0.01 & 0.00 &  & 0.03 & 0.02 & -0.18 & -0.12 \\
 \midrule
 {C-LO-$\delta_I$} & -0.04 & -0.04 & -0.04 & -0.02 & -0.03 & -0.03 &  & -0.01 & -0.21 & -0.15 \\
 {C-LO-$\delta_I^2$} & -0.04 & -0.04 & -0.04 & -0.02 & -0.02 & -0.02 & 0.01 &  & -0.21 & -0.14 \\
 \midrule
 {U-BC} & 0.17 & 0.17 & 0.17 & 0.19 & 0.18 & 0.18 & 0.21 & 0.21 &  & 0.07 \\
 \midrule
 {C-BC} & 0.10 & 0.10 & 0.10 & 0.12 & 0.12 & 0.12 & 0.15 & 0.14 & -0.07 &  \\
\bottomrule
\end{tabu}
\end{table*}
\begin{table*}[t]
\centering
\caption{Differences in minimum, median and average Moran's $I$ for the FLT dataset. A positive value implies that the algorithm for column outperforms the algorithm for the row.}
\label{tab:diff_FLT_MI}
\begin{tabu} to \linewidth {X[1,l,m]|X[1,r,m]X[1,r,m]|X[1,r,m]X[1,r,m]|X[1,r,m]X[1,r,m]|X[1,r,m]X[1,r,m]|X[1,r,m]|X[1,r,m]}
\toprule
{\bfseries minimum} & {U-LO-$L_2$} & {U-LO-$L_2^2$} & {C-LO-$L_2$} & {C-LO-$L_2^2$} & {U-LO-$\delta_I$} & {U-LO-$\delta_I^2$} & {C-LO-$\delta_I$} & {C-LO-$\delta_I^2$} & {U-BC} & {C-BC} \\
 \midrule
 {U-LO-$L_2$} &  & 0.00 & 0.06 & 0.05 & -0.05 & -0.03 & 0.02 & 0.04 & -0.30 & -0.19 \\
 {U-LO-$L_2^2$} & 0.00 &  & 0.06 & 0.05 & -0.05 & -0.03 & 0.02 & 0.04 & -0.30 & -0.19 \\
 \midrule
 {C-LO-$L_2$} & -0.06 & -0.06 &  & -0.01 & -0.11 & -0.09 & -0.03 & -0.02 & -0.35 & -0.25 \\
 {C-LO-$L_2^2$} & -0.05 & -0.05 & 0.01 &  & -0.10 & -0.08 & -0.03 & -0.01 & -0.35 & -0.24 \\
 \midrule
 {U-LO-$\delta_I$} & 0.05 & 0.05 & 0.11 & 0.10 &  & 0.02 & 0.07 & 0.09 & -0.25 & -0.14 \\
 {U-LO-$\delta_I^2$} & 0.03 & 0.03 & 0.09 & 0.08 & -0.02 &  & 0.06 & 0.07 & -0.27 & -0.16 \\
 \midrule
 {C-LO-$\delta_I$} & -0.02 & -0.02 & 0.03 & 0.03 & -0.07 & -0.06 &  & 0.01 & -0.32 & -0.22 \\
 {C-LO-$\delta_I^2$} & -0.04 & -0.04 & 0.02 & 0.01 & -0.09 & -0.07 & -0.01 &  & -0.33 & -0.23 \\
 \midrule
 {U-BC} & 0.30 & 0.30 & 0.35 & 0.35 & 0.25 & 0.27 & 0.32 & 0.33 &  & 0.10 \\
 \midrule
 {C-BC} & 0.19 & 0.19 & 0.25 & 0.24 & 0.14 & 0.16 & 0.22 & 0.23 & -0.10 &  \\
\bottomrule
\end{tabu}

\medskip
\begin{tabu} to \linewidth {X[1,l,m]|X[1,r,m]X[1,r,m]|X[1,r,m]X[1,r,m]|X[1,r,m]X[1,r,m]|X[1,r,m]X[1,r,m]|X[1,r,m]|X[1,r,m]}
\toprule
{\bfseries median} & {U-LO-$L_2$} & {U-LO-$L_2^2$} & {C-LO-$L_2$} & {C-LO-$L_2^2$} & {U-LO-$\delta_I$} & {U-LO-$\delta_I^2$} & {C-LO-$\delta_I$} & {C-LO-$\delta_I^2$} & {U-BC} & {C-BC} \\
 \midrule
 {U-LO-$L_2$} &  & 0.00 & 0.02 & 0.01 & 0.01 & 0.02 & 0.01 & 0.01 & -0.28 & -0.20 \\
 {U-LO-$L_2^2$} & 0.00 &  & 0.02 & 0.01 & 0.01 & 0.02 & 0.01 & 0.01 & -0.28 & -0.20 \\
 \midrule
 {C-LO-$L_2$} & -0.02 & -0.02 &  & -0.01 & -0.00 & 0.00 & -0.01 & -0.01 & -0.30 & -0.22 \\
 {C-LO-$L_2^2$} & -0.01 & -0.01 & 0.01 &  & 0.00 & 0.01 & -0.00 & 0.00 & -0.29 & -0.21 \\
 \midrule
 {U-LO-$\delta_I$} & -0.01 & -0.01 & 0.00 & -0.00 &  & 0.00 & -0.01 & -0.00 & -0.29 & -0.22 \\
 {U-LO-$\delta_I^2$} & -0.02 & -0.02 & -0.00 & -0.01 & -0.00 &  & -0.01 & -0.01 & -0.30 & -0.22 \\
 \midrule
 {C-LO-$\delta_I$} & -0.01 & -0.01 & 0.01 & 0.00 & 0.01 & 0.01 &  & 0.00 & -0.29 & -0.21 \\
 {C-LO-$\delta_I^2$} & -0.01 & -0.01 & 0.01 & -0.00 & 0.00 & 0.01 & -0.00 &  & -0.29 & -0.22 \\
 \midrule
 {U-BC} & 0.28 & 0.28 & 0.30 & 0.29 & 0.29 & 0.30 & 0.29 & 0.29 &  & 0.08 \\
 \midrule
 {C-BC} & 0.20 & 0.20 & 0.22 & 0.21 & 0.22 & 0.22 & 0.21 & 0.22 & -0.08 &  \\
\bottomrule
\end{tabu}

\medskip
\begin{tabu} to \linewidth {X[1,l,m]|X[1,r,m]X[1,r,m]|X[1,r,m]X[1,r,m]|X[1,r,m]X[1,r,m]|X[1,r,m]X[1,r,m]|X[1,r,m]|X[1,r,m]}
\toprule
{\bfseries average} & {U-LO-$L_2$} & {U-LO-$L_2^2$} & {C-LO-$L_2$} & {C-LO-$L_2^2$} & {U-LO-$\delta_I$} & {U-LO-$\delta_I^2$} & {C-LO-$\delta_I$} & {C-LO-$\delta_I^2$} & {U-BC} & {C-BC} \\
 \midrule
 {U-LO-$L_2$} &  & 0.00 & 0.01 & 0.00 & -0.01 & -0.01 & 0.00 & -0.00 & -0.29 & -0.20 \\
 {U-LO-$L_2^2$} & 0.00 &  & 0.01 & 0.00 & -0.01 & -0.01 & 0.00 & -0.00 & -0.29 & -0.20 \\
 \midrule
 {C-LO-$L_2$} & -0.01 & -0.01 &  & -0.01 & -0.02 & -0.02 & -0.01 & -0.02 & -0.30 & -0.21 \\
 {C-LO-$L_2^2$} & -0.00 & -0.00 & 0.01 &  & -0.01 & -0.01 & 0.00 & -0.01 & -0.29 & -0.20 \\
 \midrule
 {U-LO-$\delta_I$} & 0.01 & 0.01 & 0.02 & 0.01 &  & 0.00 & 0.01 & 0.01 & -0.28 & -0.18 \\
 {U-LO-$\delta_I^2$} & 0.01 & 0.01 & 0.02 & 0.01 & -0.00 &  & 0.01 & 0.01 & -0.28 & -0.18 \\
 \midrule
 {C-LO-$\delta_I$} & -0.00 & -0.00 & 0.01 & -0.00 & -0.01 & -0.01 &  & -0.01 & -0.29 & -0.20 \\
 {C-LO-$\delta_I^2$} & 0.00 & 0.00 & 0.02 & 0.01 & -0.01 & -0.01 & 0.01 &  & -0.28 & -0.19 \\
 \midrule
 {U-BC} & 0.29 & 0.29 & 0.30 & 0.29 & 0.28 & 0.28 & 0.29 & 0.28 &  & 0.09 \\
 \midrule
 {C-BC} & 0.20 & 0.20 & 0.21 & 0.20 & 0.18 & 0.18 & 0.20 & 0.19 & -0.09 &  \\
\bottomrule
\end{tabu}
\end{table*}
\begin{table*}[t]
\centering
\caption{Differences in minimum, median and average Moran's $I$ for the SCH dataset. A positive value implies that the algorithm for column outperforms the algorithm for the row.}
\label{tab:diff_SCH_MI}
\begin{tabu} to \linewidth {X[1,l,m]|X[1,r,m]X[1,r,m]|X[1,r,m]X[1,r,m]|X[1,r,m]X[1,r,m]|X[1,r,m]X[1,r,m]|X[1,r,m]|X[1,r,m]}
\toprule
{\bfseries minimum} & {U-LO-$L_2$} & {U-LO-$L_2^2$} & {C-LO-$L_2$} & {C-LO-$L_2^2$} & {U-LO-$\delta_I$} & {U-LO-$\delta_I^2$} & {C-LO-$\delta_I$} & {C-LO-$\delta_I^2$} & {U-BC} & {C-BC} \\
 \midrule
 {U-LO-$L_2$} &  & 0.00 & 0.07 & 0.07 & -0.02 & -0.02 & 0.07 & 0.06 & -0.13 & -0.05 \\
 {U-LO-$L_2^2$} & -0.00 &  & 0.06 & 0.07 & -0.02 & -0.02 & 0.07 & 0.06 & -0.14 & -0.05 \\
 \midrule
 {C-LO-$L_2$} & -0.07 & -0.06 &  & 0.01 & -0.08 & -0.08 & 0.01 & -0.01 & -0.20 & -0.12 \\
 {C-LO-$L_2^2$} & -0.07 & -0.07 & -0.01 &  & -0.09 & -0.09 & -0.00 & -0.01 & -0.21 & -0.12 \\
 \midrule
 {U-LO-$\delta_I$} & 0.02 & 0.02 & 0.08 & 0.09 &  & 0.00 & 0.09 & 0.08 & -0.12 & -0.03 \\
 {U-LO-$\delta_I^2$} & 0.02 & 0.02 & 0.08 & 0.09 & 0.00 &  & 0.09 & 0.08 & -0.12 & -0.03 \\
 \midrule
 {C-LO-$\delta_I$} & -0.07 & -0.07 & -0.01 & 0.00 & -0.09 & -0.09 &  & -0.01 & -0.21 & -0.12 \\
 {C-LO-$\delta_I^2$} & -0.06 & -0.06 & 0.01 & 0.01 & -0.08 & -0.08 & 0.01 &  & -0.20 & -0.11 \\
 \midrule
 {U-BC} & 0.13 & 0.14 & 0.20 & 0.21 & 0.12 & 0.12 & 0.21 & 0.20 &  & 0.09 \\
 \midrule
 {C-BC} & 0.05 & 0.05 & 0.12 & 0.12 & 0.03 & 0.03 & 0.12 & 0.11 & -0.09 &  \\
\bottomrule
\end{tabu}

\medskip
\begin{tabu} to \linewidth {X[1,l,m]|X[1,r,m]X[1,r,m]|X[1,r,m]X[1,r,m]|X[1,r,m]X[1,r,m]|X[1,r,m]X[1,r,m]|X[1,r,m]|X[1,r,m]}
\toprule
{\bfseries median} & {U-LO-$L_2$} & {U-LO-$L_2^2$} & {C-LO-$L_2$} & {C-LO-$L_2^2$} & {U-LO-$\delta_I$} & {U-LO-$\delta_I^2$} & {C-LO-$\delta_I$} & {C-LO-$\delta_I^2$} & {U-BC} & {C-BC} \\
 \midrule
 {U-LO-$L_2$} &  & 0.00 & 0.03 & 0.02 & 0.02 & 0.01 & 0.04 & 0.02 & -0.17 & -0.03 \\
 {U-LO-$L_2^2$} & -0.00 &  & 0.03 & 0.02 & 0.01 & 0.01 & 0.03 & 0.02 & -0.17 & -0.03 \\
 \midrule
 {C-LO-$L_2$} & -0.03 & -0.03 &  & -0.01 & -0.01 & -0.02 & 0.01 & -0.01 & -0.20 & -0.06 \\
 {C-LO-$L_2^2$} & -0.02 & -0.02 & 0.01 &  & -0.00 & -0.00 & 0.02 & 0.01 & -0.19 & -0.04 \\
 \midrule
 {U-LO-$\delta_I$} & -0.02 & -0.01 & 0.01 & 0.00 &  & -0.00 & 0.02 & 0.01 & -0.19 & -0.04 \\
 {U-LO-$\delta_I^2$} & -0.01 & -0.01 & 0.02 & 0.00 & 0.00 &  & 0.02 & 0.01 & -0.18 & -0.04 \\
 \midrule
 {C-LO-$\delta_I$} & -0.04 & -0.03 & -0.01 & -0.02 & -0.02 & -0.02 &  & -0.01 & -0.21 & -0.06 \\
 {C-LO-$\delta_I^2$} & -0.02 & -0.02 & 0.01 & -0.01 & -0.01 & -0.01 & 0.01 &  & -0.19 & -0.05 \\
 \midrule
 {U-BC} & 0.17 & 0.17 & 0.20 & 0.19 & 0.19 & 0.18 & 0.21 & 0.19 &  & 0.14 \\
 \midrule
 {C-BC} & 0.03 & 0.03 & 0.06 & 0.04 & 0.04 & 0.04 & 0.06 & 0.05 & -0.14 &  \\
\bottomrule
\end{tabu}

\medskip
\begin{tabu} to \linewidth {X[1,l,m]|X[1,r,m]X[1,r,m]|X[1,r,m]X[1,r,m]|X[1,r,m]X[1,r,m]|X[1,r,m]X[1,r,m]|X[1,r,m]|X[1,r,m]}
\toprule
{\bfseries average} & {U-LO-$L_2$} & {U-LO-$L_2^2$} & {C-LO-$L_2$} & {C-LO-$L_2^2$} & {U-LO-$\delta_I$} & {U-LO-$\delta_I^2$} & {C-LO-$\delta_I$} & {C-LO-$\delta_I^2$} & {U-BC} & {C-BC} \\
 \midrule
 {U-LO-$L_2$} &  & 0.00 & 0.02 & 0.03 & 0.00 & 0.00 & 0.03 & 0.03 & -0.18 & -0.03 \\
 {U-LO-$L_2^2$} & -0.00 &  & 0.02 & 0.02 & 0.00 & 0.00 & 0.03 & 0.02 & -0.18 & -0.03 \\
 \midrule
 {C-LO-$L_2$} & -0.02 & -0.02 &  & 0.00 & -0.02 & -0.02 & 0.01 & 0.00 & -0.20 & -0.06 \\
 {C-LO-$L_2^2$} & -0.03 & -0.02 & -0.00 &  & -0.02 & -0.02 & 0.01 & -0.00 & -0.21 & -0.06 \\
 \midrule
 {U-LO-$\delta_I$} & -0.00 & -0.00 & 0.02 & 0.02 &  & 0.00 & 0.03 & 0.02 & -0.18 & -0.03 \\
 {U-LO-$\delta_I^2$} & -0.00 & -0.00 & 0.02 & 0.02 & -0.00 &  & 0.03 & 0.02 & -0.18 & -0.03 \\
 \midrule
 {C-LO-$\delta_I$} & -0.03 & -0.03 & -0.01 & -0.01 & -0.03 & -0.03 &  & -0.01 & -0.21 & -0.06 \\
 {C-LO-$\delta_I^2$} & -0.03 & -0.02 & -0.00 & 0.00 & -0.02 & -0.02 & 0.01 &  & -0.21 & -0.06 \\
 \midrule
 {U-BC} & 0.18 & 0.18 & 0.20 & 0.21 & 0.18 & 0.18 & 0.21 & 0.21 &  & 0.15 \\
 \midrule
 {C-BC} & 0.03 & 0.03 & 0.06 & 0.06 & 0.03 & 0.03 & 0.06 & 0.06 & -0.15 &  \\
\bottomrule
\end{tabu}
\end{table*}
\begin{table*}[t]
\centering
\caption{Differences in minimum, median and average Moran's $I$ for the VIS dataset. A positive value implies that the algorithm for column outperforms the algorithm for the row.}
\label{tab:diff_VIS_MI}
\begin{tabu} to \linewidth {X[1,l,m]|X[1,r,m]X[1,r,m]|X[1,r,m]X[1,r,m]|X[1,r,m]X[1,r,m]|X[1,r,m]X[1,r,m]|X[1,r,m]|X[1,r,m]}
\toprule
{\bfseries minimum} & {U-LO-$L_2$} & {U-LO-$L_2^2$} & {C-LO-$L_2$} & {C-LO-$L_2^2$} & {U-LO-$\delta_I$} & {U-LO-$\delta_I^2$} & {C-LO-$\delta_I$} & {C-LO-$\delta_I^2$} & {U-BC} & {C-BC} \\
 \midrule
 {U-LO-$L_2$} &  & 0.00 & -0.03 & 0.01 & 0.06 & 0.03 & 0.11 & 0.10 & -0.04 & -0.07 \\
 {U-LO-$L_2^2$} & -0.00 &  & -0.03 & 0.01 & 0.06 & 0.03 & 0.11 & 0.10 & -0.04 & -0.07 \\
 \midrule
 {C-LO-$L_2$} & 0.03 & 0.03 &  & 0.04 & 0.09 & 0.07 & 0.14 & 0.14 & -0.00 & -0.03 \\
 {C-LO-$L_2^2$} & -0.01 & -0.01 & -0.04 &  & 0.05 & 0.03 & 0.10 & 0.10 & -0.04 & -0.07 \\
 \midrule
 {U-LO-$\delta_I$} & -0.06 & -0.06 & -0.09 & -0.05 &  & -0.02 & 0.05 & 0.05 & -0.09 & -0.12 \\
 {U-LO-$\delta_I^2$} & -0.03 & -0.03 & -0.07 & -0.03 & 0.02 &  & 0.07 & 0.07 & -0.07 & -0.10 \\
 \midrule
 {C-LO-$\delta_I$} & -0.11 & -0.11 & -0.14 & -0.10 & -0.05 & -0.07 &  & -0.00 & -0.14 & -0.17 \\
 {C-LO-$\delta_I^2$} & -0.10 & -0.10 & -0.14 & -0.10 & -0.05 & -0.07 & 0.00 &  & -0.14 & -0.17 \\
 \midrule
 {U-BC} & 0.04 & 0.04 & 0.00 & 0.04 & 0.09 & 0.07 & 0.14 & 0.14 &  & -0.03 \\
 \midrule
 {C-BC} & 0.07 & 0.07 & 0.03 & 0.07 & 0.12 & 0.10 & 0.17 & 0.17 & 0.03 &  \\
\bottomrule
\end{tabu}

\medskip
\begin{tabu} to \linewidth {X[1,l,m]|X[1,r,m]X[1,r,m]|X[1,r,m]X[1,r,m]|X[1,r,m]X[1,r,m]|X[1,r,m]X[1,r,m]|X[1,r,m]|X[1,r,m]}
\toprule
{\bfseries median} & {U-LO-$L_2$} & {U-LO-$L_2^2$} & {C-LO-$L_2$} & {C-LO-$L_2^2$} & {U-LO-$\delta_I$} & {U-LO-$\delta_I^2$} & {C-LO-$\delta_I$} & {C-LO-$\delta_I^2$} & {U-BC} & {C-BC} \\
 \midrule
 {U-LO-$L_2$} &  & 0.01 & -0.04 & 0.04 & 0.05 & 0.06 & 0.09 & 0.06 & -0.03 & -0.07 \\
 {U-LO-$L_2^2$} & -0.01 &  & -0.04 & 0.03 & 0.04 & 0.05 & 0.08 & 0.06 & -0.04 & -0.08 \\
 \midrule
 {C-LO-$L_2$} & 0.04 & 0.04 &  & 0.07 & 0.09 & 0.10 & 0.12 & 0.10 & 0.00 & -0.04 \\
 {C-LO-$L_2^2$} & -0.04 & -0.03 & -0.07 &  & 0.01 & 0.02 & 0.05 & 0.03 & -0.07 & -0.11 \\
 \midrule
 {U-LO-$\delta_I$} & -0.05 & -0.04 & -0.09 & -0.01 &  & 0.01 & 0.04 & 0.02 & -0.08 & -0.12 \\
 {U-LO-$\delta_I^2$} & -0.06 & -0.05 & -0.10 & -0.02 & -0.01 &  & 0.03 & 0.01 & -0.09 & -0.13 \\
 \midrule
 {C-LO-$\delta_I$} & -0.09 & -0.08 & -0.12 & -0.05 & -0.04 & -0.03 &  & -0.02 & -0.12 & -0.16 \\
 {C-LO-$\delta_I^2$} & -0.06 & -0.06 & -0.10 & -0.03 & -0.02 & -0.01 & 0.02 &  & -0.10 & -0.14 \\
 \midrule
 {U-BC} & 0.03 & 0.04 & -0.00 & 0.07 & 0.08 & 0.09 & 0.12 & 0.10 &  & -0.04 \\
 \midrule
 {C-BC} & 0.07 & 0.08 & 0.04 & 0.11 & 0.12 & 0.13 & 0.16 & 0.14 & 0.04 &  \\
\bottomrule
\end{tabu}

\medskip
\begin{tabu} to \linewidth {X[1,l,m]|X[1,r,m]X[1,r,m]|X[1,r,m]X[1,r,m]|X[1,r,m]X[1,r,m]|X[1,r,m]X[1,r,m]|X[1,r,m]|X[1,r,m]}
\toprule
{\bfseries average} & {U-LO-$L_2$} & {U-LO-$L_2^2$} & {C-LO-$L_2$} & {C-LO-$L_2^2$} & {U-LO-$\delta_I$} & {U-LO-$\delta_I^2$} & {C-LO-$\delta_I$} & {C-LO-$\delta_I^2$} & {U-BC} & {C-BC} \\
 \midrule
 {U-LO-$L_2$} &  & -0.00 & -0.03 & 0.03 & 0.06 & 0.05 & 0.10 & 0.09 & -0.04 & -0.08 \\
 {U-LO-$L_2^2$} & 0.00 &  & -0.03 & 0.03 & 0.06 & 0.05 & 0.10 & 0.10 & -0.04 & -0.08 \\
 \midrule
 {C-LO-$L_2$} & 0.03 & 0.03 &  & 0.06 & 0.09 & 0.09 & 0.13 & 0.13 & -0.01 & -0.05 \\
 {C-LO-$L_2^2$} & -0.03 & -0.03 & -0.06 &  & 0.02 & 0.02 & 0.06 & 0.06 & -0.07 & -0.11 \\
 \midrule
 {U-LO-$\delta_I$} & -0.06 & -0.06 & -0.09 & -0.02 &  & -0.00 & 0.04 & 0.04 & -0.09 & -0.13 \\
 {U-LO-$\delta_I^2$} & -0.05 & -0.05 & -0.09 & -0.02 & 0.00 &  & 0.04 & 0.04 & -0.09 & -0.13 \\
 \midrule
 {C-LO-$\delta_I$} & -0.10 & -0.10 & -0.13 & -0.06 & -0.04 & -0.04 &  & -0.00 & -0.14 & -0.18 \\
 {C-LO-$\delta_I^2$} & -0.09 & -0.10 & -0.13 & -0.06 & -0.04 & -0.04 & 0.00 &  & -0.13 & -0.17 \\
 \midrule
 {U-BC} & 0.04 & 0.04 & 0.01 & 0.07 & 0.09 & 0.09 & 0.14 & 0.13 &  & -0.04 \\
 \midrule
 {C-BC} & 0.08 & 0.08 & 0.05 & 0.11 & 0.13 & 0.13 & 0.18 & 0.17 & 0.04 &  \\
\bottomrule
\end{tabu}
\end{table*}

\clearpage

\begin{figure*}[t]
    \centering
    \includegraphics[page=1,width=\linewidth]{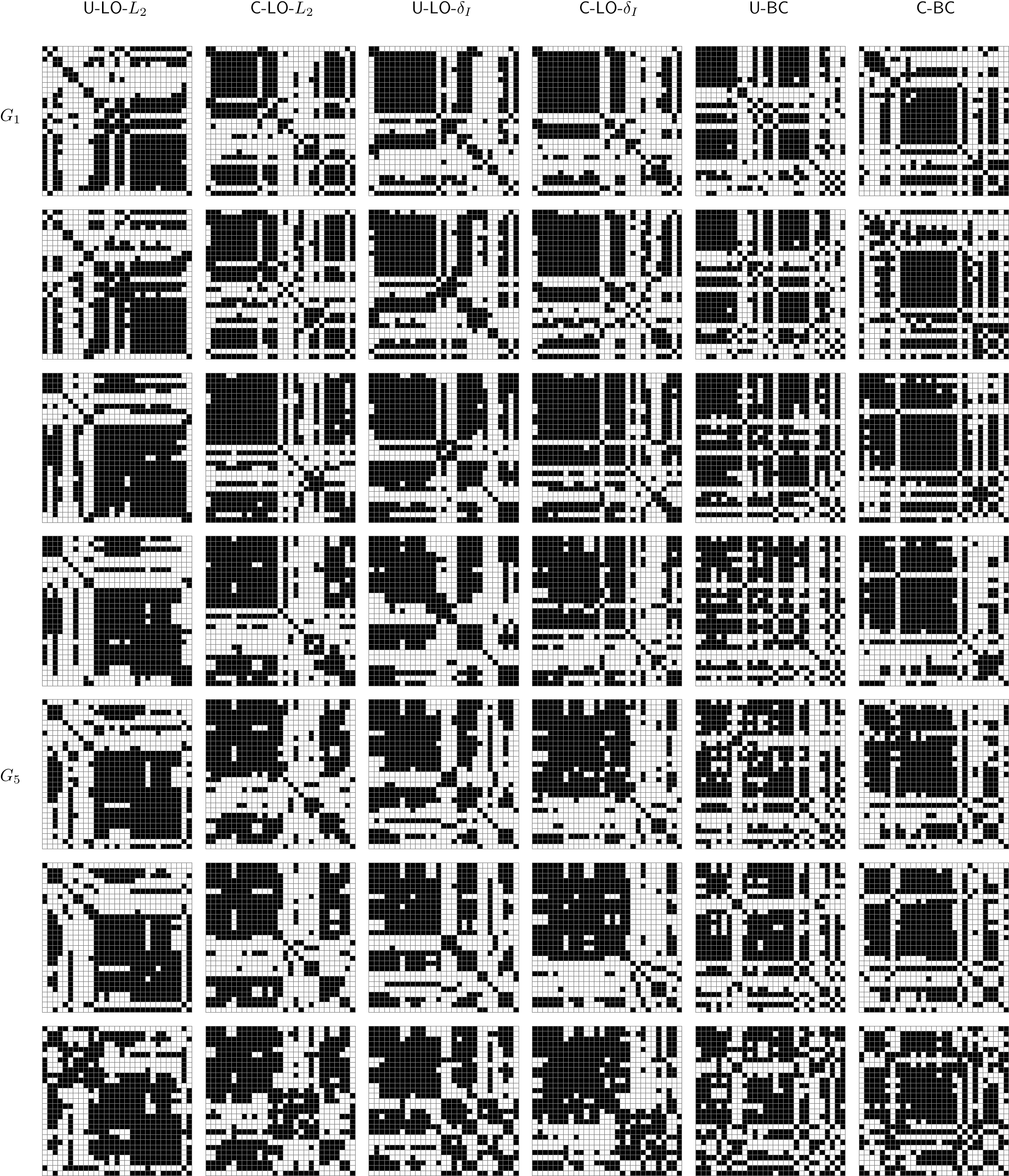}
    \caption{Matrix visualizations of graphs $G_{1}$ to $G_{7}$ of the FLT dataset, produced by the six algorithms considered in the main paper.}
    \label{fig:flt-results-1-7}
\end{figure*}

\clearpage

\begin{figure*}[t]
    \centering
    \includegraphics[page=2,width=\linewidth]{figures/flt-7.pdf}
    \caption{Matrix visualizations of graphs $G_{8}$ to $G_{14}$ of the FLT dataset, produced by the six algorithms considered in the main paper.}
    \label{fig:flt-results-8-14}
\end{figure*}

\clearpage

\begin{figure*}[t]
    \centering
    \includegraphics[page=3,width=\linewidth]{figures/flt-7.pdf}
    \caption{Matrix visualizations of graphs $G_{15}$ to $G_{21}$ of the FLT dataset, produced by the six algorithms considered in the main paper.}
    \label{fig:flt-results-15-21}
\end{figure*}

\clearpage

\begin{figure*}[t]
    \centering
    \includegraphics[page=4,width=\linewidth]{figures/flt-7.pdf}
    \caption{Matrix visualizations of graphs $G_{22}$ to $G_{28}$ of the FLT dataset, produced by the six algorithms considered in the main paper.}
    \label{fig:flt-results-22-28}
\end{figure*}

\clearpage

\begin{figure*}[t]
    \centering
    \includegraphics[page=5,width=\linewidth]{figures/flt-7.pdf}
    \caption{Matrix visualizations of graphs $G_{29}$ to $G_{35}$ of the FLT dataset, produced by the six algorithms considered in the main paper.}
    \label{fig:flt-results-29-35}
\end{figure*}

\clearpage

\begin{figure*}[t]
    \centering
    \includegraphics[page=6,width=\linewidth]{figures/flt-7.pdf}
    \caption{Matrix visualizations of graphs $G_{36}$ to $G_{42}$ of the FLT dataset, produced by the six algorithms considered in the main paper.}
    \label{fig:flt-results-36-42}
\end{figure*}

\clearpage

\begin{figure*}[t]
    \centering
    \includegraphics[page=7,width=\linewidth]{figures/flt-7.pdf}
    \caption{Matrix visualizations of graphs $G_{43}$ to $G_{49}$ of the FLT dataset, produced by the six algorithms considered in the main paper.}
    \label{fig:flt-results-43-49}
\end{figure*}

\clearpage

\begin{figure*}[t]
    \centering
    \includegraphics[page=8,width=\linewidth]{figures/flt-7.pdf}
    \caption{Matrix visualizations of graphs $G_{50}$ to $G_{56}$ of the FLT dataset, produced by the six algorithms considered in the main paper.}
    \label{fig:flt-results-50-56}
\end{figure*}

\clearpage

\begin{figure*}[t]
    \centering
    \includegraphics[page=9,width=\linewidth]{figures/flt-7.pdf}
    \caption{Matrix visualizations of graphs $G_{57}$ to $G_{63}$ of the FLT dataset, produced by the six algorithms considered in the main paper.}
    \label{fig:flt-results-57-63}
\end{figure*}

\clearpage

\begin{figure*}[t]
    \centering
    \includegraphics[page=10,width=\linewidth]{figures/flt-7.pdf}
    \caption{Matrix visualizations of graphs $G_{64}$ to $G_{70}$ of the FLT dataset, produced by the six algorithms considered in the main paper.}
    \label{fig:flt-results-64-70}
\end{figure*}

\clearpage

\begin{figure*}[t]
    \centering
    \includegraphics[page=11,width=\linewidth]{figures/flt-7.pdf}
    \caption{Matrix visualizations of graphs $G_{71}$ to $G_{77}$ of the FLT dataset, produced by the six algorithms considered in the main paper.}
    \label{fig:flt-results-71-77}
\end{figure*}

\clearpage

\begin{figure*}[t]
    \centering
    \includegraphics[page=12,width=\linewidth]{figures/flt-7.pdf}
    \caption{Matrix visualizations of graphs $G_{78}$ to $G_{84}$ of the FLT dataset, produced by the six algorithms considered in the main paper.}
    \label{fig:flt-results-78-84}
\end{figure*}

\clearpage

\begin{figure*}[t]
    \centering
    \includegraphics[page=13,width=\linewidth]{figures/flt-7.pdf}
    \caption{Matrix visualizations of graphs $G_{85}$ to $G_{91}$ of the FLT dataset, produced by the six algorithms considered in the main paper.}
    \label{fig:flt-results-85-91}
\end{figure*}

\clearpage

\begin{figure*}[t]
    \centering
    \includegraphics[page=14,width=\linewidth]{figures/flt-7.pdf}
    \caption{Matrix visualizations of graphs $G_{92}$ to $G_{95}$ of the FLT dataset, produced by the six algorithms considered in the main paper.}
    \label{fig:flt-results-92-96}
\end{figure*}

\end{document}